\let\NAT@parse\undefined
\newcommand{\optmdp}[2][]{
  \ifthenelse{\isempty{#1}}%
    {#2}     % if #1 is empty
    {#2_{#1}}% if #1 is not empty
}
\newcommand{\N}{\mathbb{N}}
\newcommand{\R}{\mathbb{R}}
\newcommand{\loaded}[2]{\tensor[^{#2}]{#1}{}}
\newcommand{\Nset}{\mathbb{N}}
\newcommand{\mdp}{\mathcal{M}}
\newcommand{\cstates}{S}
\newcommand{\states}[1][]{S_{#1}}
\newcommand{\actions}{A}
\newcommand{\trans}{\Delta}
\newcommand{\cons}{\gamma}
\newcommand{\reloads}{\mathit{R}}
\newcommand{\Ca}{\mathit{cap}}
\newcommand{\target}{T}
\newcommand{\apath}{\alpha}
\newcommand{\run}{\mathit{\varrho}}
\newcommand{\hist}{\alpha}
\newcommand{\loadedpath}[2]{\loaded{#1}{#2}}
\newcommand{\reslevs}[3][]{\optmdp[#1]{\mathit{RL}}(\loadedpath{#3}{#2})}
\newcommand{\reslevsscr}[4][]{\optmdp[#1]{\mathit{RL}}(\tensor*[^{#2}]{#3}{#4})}
\newcommand{\compatible}[3][]{\optmdp[#1]{\mathsf{Runs}}(#2,#3)}
\newcommand{\witness}[3][\sigma]{#1_{#2 \to #3}}
\newcommand{\mincapName}{\mathit{MinCap}}
\newcommand{\mincap}[3][]{\mincapName_{#1}(#2, #3)}
\newcommand{\Sinit}{S_I}
\newcommand{\Targets}[2]{\mathsf{Targets}(#1,#2)}
\newcommand{\sccs}[1]{\mathsf{Sccs}(#1)}
\newcommand{\cmax}[1]{\mathsf{Cmax}(#1)}
\newcommand{\allocation}{\mathcal{\target}}
\newcommand{\dunion}{\cup}
\pgfplotsset{compat=1.16}
\newtheorem{theorem}{Theorem}
\newtheorem{definition}{Definition}
\newtheorem{example}{Example}
\newtheorem{problem}{Problem}
\newtheorem{remark}{Remark}
\begin{document}
\title{Polynomial-Time Algorithms for Multi-Agent Minimal-Capacity Planning}%

%\markboth{\fixme \journalname, VOL. XX, NO. XX, XXXX 2017}
%{Blahoudek \MakeLowercase{\textit{et al.}}: Efficient strategy synthesis for MDPs with resource constraints.}

\author{Murat Cubuktepe, Franti\v{s}ek Blahoudek, and
Ufuk Topcu
%\thanks{% Submitted for review on \fixme. This work was partially supported by \fixme NASA’s Space Technology Research Grants program for Early Stage Innovations under the grant ``Safety-Constrained and Efficient Learning for Resilient Autonomous Space Systems'', by DARPA's grant HR001120C0065, and by the Czech Ministry of Education by ``Youth and Sports'' ERC.CZ project LL1908. Petr Novotný is supported by the Czech Science Foundation Junior grant no. GJ19-15134Y.}%
    \thanks{Murat Cubuktepe and Ufuk Topcu are with the Department of Aerospace Engineering and Engineering Mechanics, The University of Texas at Austin, Austin, USA (e-mail: mcubuktepe, utopcu@utexas.edu).}%
  \thanks{Franti\v{s}ek Blahoudek was with the Oden Institute, The University of Texas at Austin, Austin, USA. He is now with the Faculty of Information Technology, Brno University of Technology, Brno, Czech Republic (e-mail: frantisek.blahoudek@gmail.com).}

}

\maketitle

\begin{abstract}
We study the problem of minimizing the resource capacity of autonomous agents cooperating to achieve a shared task. 
More specifically, we consider high-level planning for a team of homogeneous agents that operate under resource constraints in stochastic environments and share a common goal: given a set of target locations, ensure that each location will be visited infinitely often by some agent almost surely. 
We formalize the dynamics of agents by consumption Markov decision processes.
In a consumption Markov decision process, the agent has a resource of limited capacity. 
Each action of the agent may consume some amount of the resource. 
To avoid exhaustion, the agent can replenish its resource to full capacity in designated reload states.
The resource capacity restricts the capabilities of the agent.
The objective is to assign target locations to agents, and each agent is only responsible for visiting the assigned subset of target locations repeatedly.
Moreover, the assignment must ensure that the agents can carry out their tasks with minimal resource capacity.
We reduce the problem of finding target assignments for a team of agents with the lowest possible capacity to an equivalent graph-theoretical problem.
We develop an algorithm that solves this graph problem in time that is \emph{polynomial} in the number of agents, target locations, and size of the consumption Markov decision process.
We demonstrate the applicability and scalability  of the algorithm in a scenario where hundreds of unmanned underwater vehicles monitor hundreds of locations in environments with stochastic ocean currents.
\end{abstract}

\begin{IEEEkeywords}
Resource-constrained systems, Multi-agent systems,  Markov processes\end{IEEEkeywords}

\section{Introduction}

Complex systems often consists of multiple agents interacting in stochastic environments to accomplish a task that a single agent cannot.
Examples of such scenarios include multi-robot navigation~\cite{corke2005networked,turpin2015approximation}, healthcare~\cite{dall2013distributed}, and urban air mobility~\cite{yang2020scalable,yang2020multi}.
For instance, decentralized Markov decision processes (MDPs) can accurately model such multi-agent decision-making problems in stochastic environments.
However, the complexity of synthesizing an optimal strategy for decentralized MDPs is NEXP-complete~\cite{bernstein2002complexity,BernsteinZI00}, ruling out the existence of an algorithm that runs in time that is polynomial in the number of agents.
The key reason for the computation complexity is that the decisions of one agent can influence the dynamics of the other agents, and the agents need to collaborate to compute an optimal strategy.

Autonomous systems such as robots, autonomous cars, and unmanned aerial vehicles (UAVs) operate under resource constraints:  they need a supply of some resource that is critical for their continuing operation~\cite{balaram2018mars,huang2018multi,mozaffari2017performance,mersheeva2015multi}.
For instance, consider a set of UAVs operating in a city for regularly delivering packages to a number of locations.
UAVs have a limited storage of resources, e.g., a battery, which has to be recharged regularly.
Here, the primary objective is to ensure that the system does not run out of resources during its operation.
Energy and consumption MDPs can model systems operating in stochastic environments under resource constraints, with the latter admitting polynomial-time algorithms for qualitative planning~\cite{Blahoudek2020cons}.

%A naive approach to model resource-constrained systems encodes the constraints into the state space. 

We combine resource-constrained systems and planning in multi-agent systems in a surprisingly efficient manner.
Specifically, we study the problem of minimal-capacity planning in multi-agent consumption MDPs, where multiple independent homogeneous agents cooperate in patrolling a set of target locations. 
We present an algorithm for this problem that runs in time that is polynomial in the size of the consumption MDP and in the number of agents and target locations, and can scale to hundreds of agents and target locations.

Resource capacity is an essential parameter for autonomous agents, and minimizing the required capacity brings several benefits.
For example, batteries account for up to 50\% of the weight of small UAVs~\cite{kumar2012opportunities}.
By reducing the necessary battery capacity and size, one can significantly reduce the weight of agents, improve the payload of UAVs, or reduce the manufacturing price of the UAVs.
Naturally, planning for minimal capacity might result in strategies that are not optimal with respect to the time needed to move between targets. 
However, this cannot be avoided by algorithms that run in time that is polynomial in the number of targets: the traveling salesman problem (TSP), a well-known NP-hard problem, can be reduced to planning for minimal time, even using a single agent. 

\paragraph{Our contribution} 
Given a consumption MDP, a set of target states, and a set of independent homogeneous agents with fixed initial states, we compute a target allocation and an assignment of targets to agents.
The objective is to minimize the \emph{resource capacity} of the agents while ensuring that each target state is visited infinitely often almost surely.
We develop a \emph{polynomial-time} algorithm in the number of agents and targets, and in the size of the consumption MDP.
The algorithm, to the best of our knowledge, is the first that gives an exact solution to a planning task in resource-constrained multi-agent systems and that runs in polynomial time.

The presented algorithm is based on a reduction to a new combinatorial optimization problem called \emph{minimal-cost SCC decomposition} defined on graphs with edges denoting the minimal capacity needed to reach one target from another.
This optimization problem is similar to bottleneck TSP~\cite{garfinkel1978bottleneck,carpaneto1984algorithm}.
The goal of bottleneck TSP is to find a Hamiltonian path in a weighted graph that minimizes the highest-weight edge.
However, checking whether there exists a Hamiltonian path in a graph is an NP-complete problem~\cite{garey1974some}, and therefore, bottleneck TSP is NP-hard.
In contrast to bottleneck TSP, minimal-cost SCC decomposition allows to visit each vertex more than once, and thus, it cannot solve the Hamiltonian path problem.
We show that this problem belongs to P, and our algorithm can solve this problem in polynomial time.

The proposed reduction-based solution requires a graph with target states as vertices and where the cost of an edge $(t_1,t_2)$ represents the minimal capacity needed by an agent to almost-surely reach $t_2$ from $t_1$.
Qualitative strategy synthesis in consumption MDPs can be performed in polynomial-time with respect to the model size~\cite{Blahoudek2020cons} and the minimal capacity for each edge can be precisely computed using binary search with logarithmic number of computations in capacity.
Therefore, the reduction is polynomial.
Energy models, in general, do not admit polynomial algorithms, and our algorithm would require exponential time in the size of the model if we use energy models instead of consumption MDPs.

The underlying graph-theoretical optimization problem works for arbitrary non-additive cost in the objective function and is not dependent solely on consumption MDPs.
More precisely, it can compute an optimal target assignment that minimizes the maximal ``cost'' agents would incur when moving between targets.
This cost might estimate the minimal capacity that also ensures reasonable reachability time or the maximal altitude in which some UAV needs to fly while moving between targets.
Next to the general utility, this universality also allows for avoiding the computation of the precise minimal capacities, which might become costly with a high number of targets and agents.

We demonstrate the applicability and the scalability of the algorithms on synthesizing optimal paths for persistent ocean monitoring using autonomous underwater vehicles~\cite{al2012extending}.
The presented benchmark models the dynamics in the presence of stochastic ocean currents by consumption MDPs.
We first demonstrate that it may be beneficial not to allocate target locations to all agents.
By not allocating targets to all agents, we can obtain a lower required capacity than the existing multi-agent task allocation algorithms that optimize for minimal time instead of minimal capacity~\cite{turpin2015approximation}.
Then, we demonstrate the scalability of computing the minimal capacities by synthesizing strategies in the consumption MDPs.
Finally, we demonstrate the scalability of the algorithms for minimal-cost SCC decomposition on examples with hundreds of vehicles and targets.

\paragraph{Related work}
A naive approach to model resource-constrained systems is to encode the constraints into the state space. 
The encoding consists of states augmented with the current resource level of the system, where states with level below $0$ are non-accepting sinks and transitions that change the resource level.
In energy models~\cite{chakrabarti2003resource,bouyer2008infinite}, the resource level is kept out of the state space using a system-wide integer-valued counter. 
Each transition then updates (decreases or increases) the current resource level of the system.
However, planning in energy  MDPs~\cite{chatterjee2011energy} is at least as hard as solving mean-payoff graph games~\cite{bouyer2008infinite}, which makes the existence of a polynomial-time algorithm unlikely.
Finally, in the recently introduced consumption MDPs~\cite{Blahoudek2020cons}, the agents are restricted by finite capacity, transitions can only decrease the resource level, and agents can replenish the resource only in a set of designated \emph{reload} states (to full capacity only).
These restrictions are sufficient to admit polynomial-time algorithms for qualitative planning (or solving consumption games~\cite{brazdil2012efficient}).

% There is a large body of work on multi-agent task allocation and planning.
% The existing work focused on minimizing the overall mission time~\cite{yu2012time,lepetivc2003time,ross2003unified,ji2007computational,lagoudakis2005auction,mosteo2008multi,eksioglu2009vehicle,braekers2016vehicle,turpin2015approximation,carlsson2009solving,xu20103,frederickson1976approximation}, planning subject to resource constraints~\cite{muscettola1987probabilistic,meuleau1998solving,dolgov2006resource,wu2010resource,agrawal2016scalable,gordon2012lagrangian,boutilier2016budget,chen2015decentralized,kumar2017decentralized,he2018fast}, under partial observability~\cite{brafman2013qualitative,micalizio2008monitoring,dobbe2017fully,omidshafiei2017deep,simoes2019multi,belardinelli2017verification}.

There is a large body of work on multi-agent task allocation and planning.
The existing work focused on minimizing the overall mission time~\cite{yu2012time,lepetivc2003time,ross2003unified,ji2007computational,lagoudakis2005auction,mosteo2008multi,eksioglu2009vehicle,braekers2016vehicle,turpin2015approximation}, planning subject to resource constraints~\cite{muscettola1987probabilistic,meuleau1998solving,dolgov2006resource,boutilier2016budget,chen2015decentralized,kumar2017decentralized,he2018fast}, under partial observability~\cite{brafman2013qualitative,micalizio2008monitoring,belardinelli2017verification}.

As mentioned previously, multi-agent planning for minimal time is at least as hard as solving TSP, which is NP-hard.
Planning and synthesis in stochastic environments subject to arbitrary resource and task constraints requires memory that is exponential in the number of objectives even for single-agent problems \cite{randour2015percentile}.
Planning in stochastic environments subject to partial observability is known to be undecidable \cite{madani2003undecidability,chatterjee2015pomdp}.
However, there are several practical approaches for planning in partially observable stochastic environments \cite{spaan2005perseus,cubuktepe2020robust}.

\paragraph{Organization and outline of the techniques.}

We introduce consumption MDPs and other necessary formulations in \Cref{sec:prelims}.
\Cref{sec:problem_statement} formally states the problems that we study.
We reduce the resource-constrained multi-agent planning problems into equivalent graph-theoretic problems in \Cref{sec:approach}.
\Cref{sec:solution} presents the algorithms for solving the graph-theoretic problems and discusses the algorithmic improvements.
Finally, we demonstrate the applicability of the algorithms using several numerical examples in \Cref{sec:examples}.

\section{Preliminaries}\label{sec:prelims}

\subsection{Consumption Markov Decision Processes}

\begin{definition}[Consumption MDP]
A \emph{consumption Markov decision process} (MDP) is a tuple %
$\mdp = (\cstates, \actions, \trans, \cons, \reloads)$ where
$\cstates$ is a finite set of  \emph{states}, %
$\actions$ is a finite set of \emph{actions}, %
$\trans\colon \cstates\times \actions \times\states\rightarrow [0,1]$ is a
total \emph{transition function} such that for all $s \in \states$ and all $a\in\actions$ we have that $\sum_{t\in\states}\trans(s,a,t)=1$, %
$\cons\colon \cstates \times \actions \rightarrow \N$ is a total
\emph{consumption function}, and %
$\reloads\subseteq \cstates$ is a set of \emph{reload states} where the
resource can be reloaded.%
\end{definition}

Intuitively, $(\states,\actions,\trans)$ is an MDP and the consumption function $\cons$ and reload states $\reloads$ influence evolution of the resource in this MDP.
Agents operating in $\mdp$ are restricted by their capacity and they create paths in $\mdp$.
A \emph{path} is a (finite or infinite) alternating sequence of states and actions
$\hist=s_1a_1s_2a_2s_3\dots$ such that $\trans(s_i,a_i,s_{i+1}) > 0$ for
all $i$. An infinite path is a \emph{run}.
An agent with capacity $\Ca$ start with the resource level equal to $\Ca$, actions consume the resource, and reload states
replenish the resource level to $\Ca$. The resource is depleted if its level
drops below $0$, which we indicate by the symbol $\bot$ in the following.

Formally, let $\apath = s_1a_1s_2\ldots s_n$ (where $n$ might be $\infty$) be a
path in $\mdp$ and let $\Ca\in\Nset$ be capacity. The
\emph{resource levels of $\apath$ with $\Ca$} is the sequence
$\reslevs[\mdp]{\Ca}{\apath} = r_1r_2\ldots r_n$ where $r_1 = \Ca$ and for $1 \leq i
< n$ the next resource level $r_{i+1}$ is defined inductively, using
$c_i=\cons(s_i, a_i)$ for the consumption of $a_i$, as
\[
r_{i+1} =
\begin{cases}
  r_i - c_i &
    \text{if } s_i \not\in\reloads \text{ and }%
    c_i \leq r_i \neq \bot\text{,}\\
  \Ca - c_i &
    \text{if } s_i \in \reloads \text{ and }%
    c_i\leq \Ca \text{ and }%
    r_i \neq \bot\text{,}\\
  \bot & \text{otherwise}.
\end{cases}
\]

The path $\apath$ is \emph{safe} with $\Ca$ if $\bot$ is not present
in $\reslevs[\mdp]{\Ca}{\apath}$.
We say that $\apath$ \emph{reaches} a target state $t \in \states$ if $s_i=t$ for some $i$.

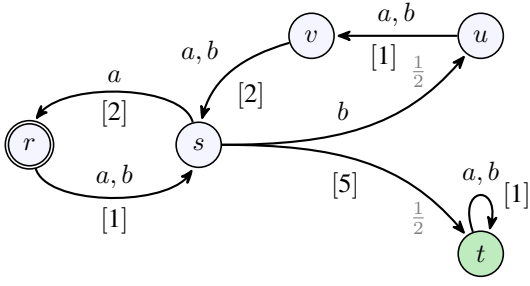
\begin{figure}[tb]
\centering
\begin{tikzpicture}[automaton, xscale=1.5]
\tikzstyle{a} = [black, thick]
\tikzstyle{b} = [black, thick]
\tikzstyle{ab} = [black, thick]
\node[state] (start) at (0,0)  {$s$};
\node[state,reload,target] (t)   at (2.5,-1.45) {$t$};
\node[state,reload,accepting] (rel)   at (-1.5,0) {$r$};
\node[state] (mis)      at (2.5,1.45)  {$u$};
\node[state] (pathhome) at (1,1.45)  {$v$};

\path[->,auto,swap]
(start) edge[bend right=80, a]
  node[] {$a$}
  node[cons, swap] {2}
(rel)
(rel) edge[bend right=80, ab]
  node[swap] {$a,b$}
  node[cons] {1}
(start)
%%% Branching edge
(start)
  edge[out=0, in=120,looseness=1.1, b]
    node[cons] {5}
    node[prob,below left,pos=.8] {$\frac{1}{2}$}
  (t)
  edge[out=0, in=240,looseness=1.1, b]
    node[swap, pos=.46] {$b$}
    node[prob,above left,pos=.8] {$\frac{1}{2}$}
  (mis)
%%% Path home
(mis) edge[ab]
  node {$a, b$}
  node[cons,swap, pos=.6] {1}
(pathhome)
(pathhome) edge[ab, bend right]
  node {$a,b$}
  node[cons, swap, pos=.5] {2}
(start)
%%% Loop
(t) edge[loop above, ab, looseness=16]
  node[] {$a,b$}
  node[cons, right, outer sep=5pt] {1}
(t)
;
\end{tikzpicture}
\caption{A consumption MDP with a target $t$. States are circles, reload states are double circled, and
the target is green. Functions $\trans$ and $\cons$ are given by (possibly branching) edges in the
graph. Each edge is labeled by the name of the action and by its consumption
enclosed in brackets. Probabilities of outcomes are given by gray labels in
proximity of the respective successors.
To avoid clutter, we omit probability $1$ for non-branching edges and we
merge edges that differ only in action names and otherwise are identical.}%
\label{fig:cmdp}
\end{figure}

\begin{example}
\label{ex:cmdp-i}
Consider the consumption MDP in \cref{fig:cmdp} and the run
$\run=(sara)^\omega$. We have that
$\reslevs[\mdp]{10}{\run}=10, 8, 9, 7, 9, 7\ldots$ and thus
$\run$ is safe with capacity 10. On the other hand, for the run
$\run'=(sbuava)^\omega$ we have
$\reslevsscr[\mdp]{10}{\run}{^\prime} = 10, 5, 4, 2, \bot, \bot, \ldots$
and, as $\rho'$ does not visit any reload state, it is not safe with any finite capacity.
\end{example}

A \emph{strategy} $\sigma$ for $\mdp$ is a function that assigns to each history an action to play. 
An agent operating in $\mdp$ under control of $\sigma$ starting in some initial state $s\in\cstates$ creates a path $\apath=s_1a_1s_2\ldots$ as follows. 
The path starts with $s_1=s$ and for $i \geq 1$ the action $a_i$ is selected by the strategy as $a_{i}=\sigma(s_1a_1s_2\ldots s_i)$, and
the next state $s_{i+1}$ is chosen randomly according to the values of $\trans(s_i, a_i, \cdot)$.
We denote the set of all runs in $\mdp$ created by $\sigma$ from $s$ by $\compatible[\mdp]{\sigma}{s}$. 
We say that $\sigma$ is safe from $s\in\states$ with capacity $\Ca$ if all runs from $\run\in\compatible{\sigma}{s}$ are safe with $\Ca$.

We say that a strategy $\sigma$ with capacity $\Ca$ \emph{safely reaches $t\in\states$ almost surely} from $s$
if and only if $\sigma$ is safe from $s$ with $\Ca$ and the probability that a run from $\compatible[\mdp]{\sigma}{s}$ reaches $t$ is equal to $1$.
The minimal capacity needed to reach $t$ from $s$ is denoted by $\mincap[\mdp]{s}{t}$ and is formally defined as the lowest $c$ such that there exists a strategy $\witness{s}{t}$ that safely reaches $t$ from $s$ with $c$ almost surely in $\mdp$;
we call $\witness{s}{t}$ the \emph{witness strategy} for reaching $t$ from $s$ with $\mincap[\mdp]{s}{t}$.

\begin{example}
Consider again the consumption MDP in \cref{fig:cmdp} and an agent with the task to reach $t$ from $r$ almost surely. 
The minimum capacity needed to reach $t$ from $r$ almost surely is $\mincap[\mdp]{r}{t}=11$ and the witness strategy $\witness{r}{t}$ plays $b$ only in $s$ with resource level at least $10$, and otherwise plays $a$.
\end{example}

The minimal capacity $\mincap[\mdp]{s}{t}$ can be computed using binary search, starting with some sufficient initial capacity $c$ to reach $t$ from $s$ almost surely. In each iteration of the binary search, we check whether there exists some strategy that almost surely reaches $t$ from $s$ with the current capacity. This process requires at most $\log(c)$ instances of the polynomial algorithms of~\cite{Blahoudek2020cons}.

\subsection{Allocations and Assignments}

%Remove definition environment

%define assignment for $A \geq B$ and use $\bot$ (here $\emptyset$ makes no sense).

Given a set of targets $\target$, and a number $m$, a \emph{target allocation} for $\target$ and $m$ decomposes the set of targets into $m$ disjoint sets $\{\target_1,\ldots,\target_m\}$. That is,
\[
T_1\cup\ldots\cup T_m=\target, \text{ and } T_i \cap T_j=\emptyset \text{ for all } 1 \leq i < j \leq m.
\]
We denote the set of all valid target allocations for $\target$ and $m$ by $\Targets{\target}{m}$.

Let $A$ and $B$ be two sets with $|A|\geq|B|$.
An \emph{assignment from $A$ to $B$} is an injective (possibly partial) function $f\colon A\to B$, meaning distinct elements in $A$ are mapped to distinct elements in $B$. 
We denote the fact that $f$ is not defined for $a \in A$ by $f(a)=\bot$.

Intuitively, given a target allocation $\allocation$ for a consumption MDP, we assign sets of targets from $\allocation$ to the agents with initial states from $\Sinit$ by an assignment function $f\colon \Sinit\to\allocation$. 
More specifically, the agent $a_i$ with the initial state $i\in\Sinit$ will be responsible for the targets given by the assignment $f(i)$.

\subsection{Graphs with Costs}

\paragraph{Cost functions.}
Let $A$ be a set. Each function $\gamma \colon A \to \R$ that assigns real numbers to elements of $A$ is a \emph{cost function} for $A$.
Let $B\subseteq A$ and let $\gamma$ be the cost function for $A$.
By $\gamma[B]$, we denote the cost function for $B$ that is defined as $\gamma$ on all elements of $B$.
We say that $\gamma[B]$ \emph{restricts the domain} of $\gamma$ to $B$. 

\paragraph{Graphs with costs.}
A \emph{directed graph with costs} is a tuple $G=(V,E,C)$ where $V$ is a set of \emph{vertices}, $E\subseteq V\times V$ is a set of \emph{edges}, and $C\colon E \to \R$ is a \emph{cost} function.
For simplicity, we write $C(v_1,v_2)$ instead of $C((v_1,v_2))$. 
The \emph{maximum cost} in $G$ is $\cmax{G}=\max_{e\in E}C(e)$. 
Given a graph $G$, we denote the set of its vertices and edges by $V(G)$ and $E(G)$, respectively.

A graph $H=(V', E', C')$ is a \emph{subgraph of $G$} if and only if $V' \subseteq V$, $E' \subseteq E$, and $C'=C[E']$. 
Moreover, if $E'=E\cap (V'\times V')$, we call $H$ an \emph{induced subgraph}.
We use $G\setminus e$ to denote the subgraph $(V, E \setminus \{e\}, C[E \setminus \{e\}])$ and for $V'\subseteq V$ we use $G[V']$ to denote the induced subgraph of $G$ with vertices $V'$. 

The graph $G$ is \emph{strongly connected} if for all distinct vertices $u, v\in V$ there is a sequence of consecutive edges that connects $u$ and $v$; that is, $(u,v_1)(v_1,v_2)\ldots(v_i, v)$.
A strongly connected subgraph of $G$ is a \emph{strongly connected component (SCC)} of $G$. 
An SCC $(V', E', C')$ is \emph{maximal}, if there is no other SCC $(V'', E'', C'')$ of $G$ such that $V'\subseteq V''$ and $E' \subsetneq E''$. 
We denote the set of all maximal SCCs of $G$ by $\sccs{G}$.

\subsection{Bipartite Graphs and Matchings}

A bipartite graph $B=(U\cup V, E)$ consists of two disjoint set of vertices $U$ and $V$ and a set of edges $E\subseteq U \times V$ from $U$ to $V$.

A matching $M\subseteq E$ in $B$ is a subset of the edges such that no two edges in $M$ have a common vertex.
We say that $M$ is \emph{maximum} if $|M|\geq |M'|$ holds for all other matching $M'$.

\begin{example}\label{ex:matching}
\Cref{fig:matching} shows a bipartite graph $B$ (left) and its subgraph $B'$ (right) and maximal matchings in these graphs. While $B$ admits a maximal matching $M=\{(u_1, v_3), (u_2, v_1), (u_3, v_2)\}$ of size $3$, we can only find matchings of at most size $2$ in $B'$. We can alter $M' = \{(u_1, v_2), (u_2, v_1)\}$, highlighted in \cref{fig:matching:b}, by replacing $(u_2, v_1)$ with $(u_2, v_3)$ or by replacing $(u_1, v_1)$ with $(u_3, v_2)$. However, we cannot add another edge into the matching without removing another.
\end{example}

\begin{figure}[t]
\centering
\begin{minipage}[c]{\linewidth}
\centering
\begin{tikzpicture}[automaton]
\tikzstyle{A}=[fill=blue!15]
\tikzstyle{B}=[fill=red!15]
\tikzstyle{removed}=[densely dashed]
\tikzstyle{assignment}=[very thick, red]

\node[state,A]  (A1)   at (0,0)  {$u_1$};
\node[state,A]  (A2)   at (2.3,0) {$u_2$};
\node[state,A]  (A3)   at (4.6,0) {$u_3$};
\node[state,B]  (B1)   at (0,-2.7)  {$v_1$};
\node[state,B]  (B2)   at (2.3,-2.7)  {$v_2$};
\node[state,B]  (B3)   at (4.6,-2.7)  {$v_3$};
%\path[draw,->] (A1) edge[] node[left] {$9$} (B1);
\path[draw,->] (A1) edge[] node[below left, outer sep=-1pt, pos=0.1] {} (B2);
\path[draw,->,assignment] (A1) edge[] node[above right, outer sep=-1pt, pos=0.05] {} (B3);
\path[draw,->,assignment] (A2) edge[] node[above left, outer sep=-1pt, pos=0.1] {} (B1);
%\path[draw,->] (A2) edge[] node[above left, outer sep=-1pt, pos=0.2] {$7$} (B2);
\path[draw,->] (A2) edge[] node[above right, outer sep=-1pt, pos=0.1] {} (B3);
%\path[draw,->] (A3) edge[] node[above left, outer sep=-1pt, pos=0.05] {$6$} (B1);
\path[draw,->,assignment] (A3) edge[] node[below right, outer sep=-1pt, pos=0.1] {} (B2);
%\path[draw,->] (A3) edge[] node[right] {$8$} (B3);
\end{tikzpicture}
  \captionsetup{width=.93\linewidth}
\subcaption{$B=(\{u_1, u_2, u_3\} \cup \{v_1, v_2, v_3\}, E)$}
\label{fig:matching:a}
\end{minipage}%
\par\vfill
\vspace{0.3cm}
\begin{minipage}[c]{\linewidth}
\centering
\begin{tikzpicture}[automaton]
\tikzstyle{A}=[fill=blue!15]
\tikzstyle{B}=[fill=red!15]
\tikzstyle{bot}=[dotted,fill=red!15,draw=gray]
\tikzstyle{removed}=[densely dashed,thin,gray]
\tikzstyle{assignment}=[very thick, red]

\node[state,A]  (A1)   at (0,0)  {$u_1$};
\node[state,A]  (A2)   at (2.3,0) {$u_2$};
\node[state,A]  (A3)   at (4.6,0) {$u_3$};
\node[state,B]  (B1)   at (0,-2.7)  {$v_1$};
\node[state,B]  (B2)   at (2.3,-2.7)  {$v_2$};
\node[state,B]  (B3)   at (4.6,-2.7)  {$v_3$};
\path[draw,->,assignment] (A1) edge[] node[below left, outer sep=-1pt, pos=0.1] {} (B2);
%\path[draw,->,removed] (A1) edge[] node[above right, outer sep=-1pt, pos=0.05] {} (B3);
\path[draw,->,assignment] (A2) edge[] node[above left, outer sep=-1pt, pos=0.1] {} (B1);
%\path[draw,->] (A2) edge[] node[above left, outer sep=-1pt, pos=0.2] {$7$} (B2);
\path[draw,->] (A2) edge[] node[above right, outer sep=-1pt, pos=0.1] {} (B3);
%\path[draw,->] (A3) edge[] node[above left, outer sep=-1pt, pos=0.05] {$6$} (B1);
\path[draw,->] (A3) edge[] node[below right, outer sep=-1pt, pos=0.1] {} (B2);
\end{tikzpicture}
  \captionsetup{width=.93\linewidth}
\subcaption{$B' = B \setminus (u_1, v_3)$}
  \label{fig:matching:b}
\end{minipage}%
\caption{Two bipartite graphs with maximum matchings indicated with red edges.}
\label{fig:matching}
\end{figure}
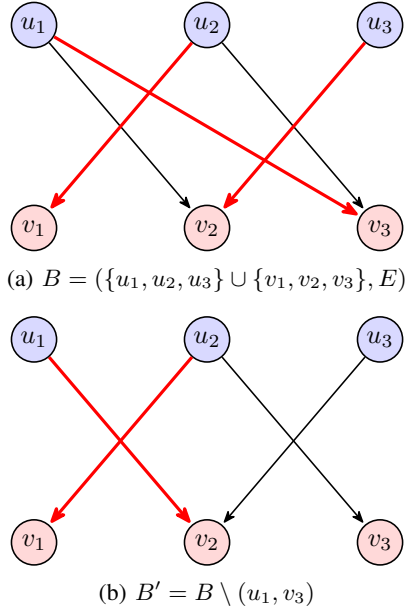

\section{Problem Statement}\label{sec:problem_statement}

%\mc{define target allocation in consumption mdps?}

In this paper, we solve two problems in multi-agent planning for minimal capacity in consumption MDPs.
The agents can be deployed everywhere in the model for the first problem, while the starting locations (initial states) of the agents are fixed in the second problem.
An optimal target allocation is sufficient to solve Problem~\ref{prob:plain}. 
The solution of Problem~\ref{prob:initlocs} must also include an assignment from the the initial states of the agents to sets of targets.

\begin{remark}
We assume that all target states in the consumption MDP are reload states to simplify the presentation. 
Our results would still apply to the general case but the computation of minimal capacity $\mincap[\mdp]{s}{t}$ that is needed to reach $t$ from $s$ would be more involved without this assumption. 
\end{remark}

\begin{problem}[Minimal-capacity multi-agent target allocation]\label{prob:plain}
Given a consumption MDP $\mdp = (\cstates, \actions, \trans, \cons, \reloads)$ with a set of targets $\target\subseteq \reloads$, find a target allocation $\allocation \in \Targets{T}{m}$ to $m$ homogeneous agents while minimizing the capacity required to ensure that with probability 1, each target in $\target$ is visited infinitely-often by an agent.
\end{problem}

\begin{problem}[Minimal-capacity multi-agent routing]\label{prob:initlocs}
Given a consumption MDP $\mdp = (\cstates, \actions, \trans, \cons, \reloads)$ with a set of targets $\target\subseteq \reloads$, and a set of $\Sinit\subseteq \states$ initial states,
find $m\leq |\Sinit|$, a target allocation $\allocation\in\Targets{\target}{m}$, and an assignment $f\colon \Sinit\rightarrow \allocation$ to $m$ homogeneous agents while minimizing the capacity required to ensure that each target in $\target$ is visited infinitely-often by an agent with probability $1$ and that, if requested, each agent can come back to its initial location.
\end{problem}

\section{Approach}\label{sec:approach}

We solve Problems \ref{prob:plain} and \ref{prob:initlocs} by reductions to graph-theoretical problems.
Intuitively, the vertices of the graphs are the targets in $\target$ and the initial locations $\Sinit$.
The cost of an edge $(t_1, t_2)$ is the minimal capacity needed to reach a state $t_2$ from $t_1$ with probability $1$.
Given a number $m$ denoting the number of agents, we decompose the graph into $m$ SCCs such that the maximal cost in each SCC is minimized.
We assign an agent to each SCC of this decomposition. 
Each non-trivial SCC contains a cycle. 
Thus, the agent is able to visit each target from the assigned SCC with probability 1 infinitely often by repeatedly visiting the targets in the order given by the cycle. 
Moreover, the agent needs capacity which is lower or equal to the highest cost on this cycle, which is at most the highest cost present in the found SCCs. 
As the $m$ SCCs contains all targets from $\target$, the agents can, together, visit all states in $\target$ infinitely often with probability 1.

Problem~\ref{prob:initlocs} requires more attention. In addition to decomposing the graph into $m$ SCCs, we need to take the paths from and to the initial locations into account.
As deploying less than $m$ agents might be beneficial for Problem 2, we also seek for decompositions into less than $m$ SCCs and a partial assignment, if the cost of deploying agents would be too high otherwise. 
We will demonstrate the existence of this benefit in our numerical examples.

\subsection{Solution of Minimum-Capacity Multi-Agent Target Allocation}
In this section, we introduce the problem called \emph{minimal-cost SCC decomposition}, and present the reduction of minimum-capacity multi-agent target allocation problem into this problem.

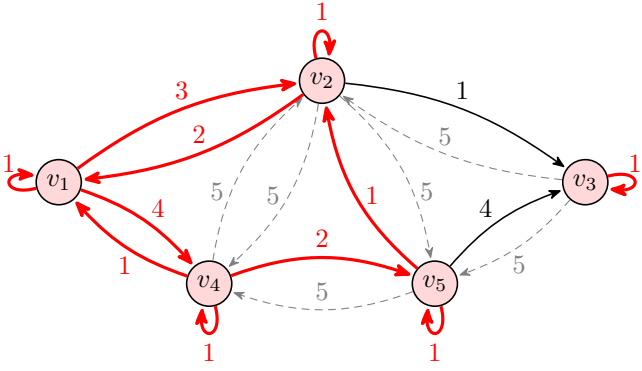
\begin{figure}[t]
\centering
\begin{tikzpicture}[automaton]
\tikzstyle{I}=[fill=blue!15]
\tikzstyle{V}=[fill=red!15]
\tikzstyle{removed}=[densely dashed,draw=gray,thin,gray]
\tikzstyle{assignment}=[very thick, red]
\tikzstyle{bottleneck}=[assignment]

\node[state,V]  (V1)   at (1,1.35)  {$v_1$};
\node[state,V]  (V2)   at (4.5,2.7)  {$v_2$};
\node[state,V]  (V3)   at (8,1.35)  {$v_3$};
\node[state,V]  (V4)   at (3,0.0)  {$v_4$};
\node[state,V]  (V5)   at (6,0.0)  {$v_5$};

\path[draw,->,bottleneck]  (V1) edge[loop left] node[above] {$1$} (V1);
\path[draw,->,bottleneck]  (V2) edge[loop above] node[above] {$1$} (V2);
\path[draw,->,bottleneck]  (V3) edge[loop right] node[above] {$1$} (V3);
\path[draw,->,bottleneck]  (V4) edge[loop below] node[below] {$1$} (V4);
\path[draw,->,bottleneck]  (V5) edge[loop below] node[below] {$1$} (V5);

\path[draw,->,bottleneck] (V4) edge[bend left,out=20, in=160] node[above] {$2$} (V5);
\path[draw,->,removed] (V5) edge[bend left,out=20, in=160] node[above] {$5$} (V4);
\path[draw,->,bottleneck] (V5) edge[bend left,out=20, in=160] node[right] {$1$} (V2);
\path[draw,->,removed] (V2) edge[bend left,out=20, in=160] node[right,xshift=0.12cm,yshift=-0.3cm] {$5$} (V5);
\path[draw,->,removed] (V4) edge[bend left,out=20, in=160] node[left,xshift=-0.12cm,yshift=-0.3cm] {$5$} (V2);
\path[draw,->,removed] (V2) edge[bend left,out=20, in=160] node[left] {$5$} (V4);
\path[draw,->,bottleneck] (V1) edge[bend left,out=15, in=165] node[above] {$3$} (V2);
\path[draw,->,bottleneck] (V2) edge[bend left,out=15, in=165] node[above] {$2$} (V1);
\path[draw,->,bottleneck] (V1) edge[bend left,out=15, in=165] node[above,xshift=0.2cm,yshift=-0.1cm] {$4$} (V4);
\path[draw,->,bottleneck] (V4) edge[bend left,out=15, in=165] node[below] {$1$} (V1);
\path[draw,->,removed] (V3) edge[bend left,out=15, in=165] node[below] {$5$} (V5);
\path[draw,->] (V5) edge[bend left,out=15, in=165] node[above,xshift=-0.2cm,yshift=-0.1cm] {$4$} (V3);
\path[draw,->,removed] (V3) edge[bend left,out=15, in=165] node[above] {$5$} (V2);
\path[draw,->] (V2) edge[bend left,out=15, in=165] node[above] {$1$} (V3);
\end{tikzpicture}
\caption{$G=(\{v_1, v_2, v_3, v_4, v_5\}, E, C)$ with $C(a,b)$ given by the label of $(a,b)$. We denote the edges in $E$ by all edges, in $E[H^*]$ by non-dashed edges. The edges that are within one of the SCCs of $H^*$ are red-colored.}
\label{fig:scc}
\end{figure}

\begin{problem}[Minimal-cost SCC decomposition]\label{prob:mincap_scc} Given a complete graph $G=(V,V\times V,C)$ and a number $n$, compute a subgraph $H^*=(V, E^*, C[E^*])$ of $G$ such that the SCC decomposition $\sccs{H^*}$ has at most $n$ elements while minimizing $\cmax{H^*}$.
\end{problem}

\begin{example}
\Cref{fig:scc} shows the solution of the minimal-cost SCC decomposition problem for the graph $G=(\{v_1, v_2, v_3, v_4, v_5\}, E, C)$ and $n=2$.
For clarity, we do not include some of the edges in $G$.
The SCCs of $H^*$ are $Q_1$ with vertices $V(Q_1)=\{v_1, v_2, v_4, v_5\}$ and $Q_2$ which contains only the vertex $v_3$.
\end{example}

Let $\mdp$ be a consumption MDP, let $\target$ be a set of states in $\mdp$ and let $m$ be a number of agents to be deployed in $\mdp$. 
We construct the graph $G_{\mdp}^{\target}$ as  \[G_{\mdp}^{\target}=(\target, \target\times \target, \mincapName_{\mdp}[T\times T]).\]

Clearly, $G_{\mdp}^{\target}$ is of size polynomial with respect to the size of $\mdp$. Moreover, it can be constructed again in polynomial time, since computation of the cost for each edge is polynomial with respect to the size of $\mdp$.

\begin{theorem}\label{thm:red-i}
Solving Problem \ref{prob:plain} for $\mdp$,  $\target$, and $m$ is equivalent to solving Problem \ref{prob:mincap_scc} for $G=G_{\mdp}^{\target}$ and $n=m$.
\end{theorem}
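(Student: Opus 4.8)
The plan is to read ``equivalent'' as follows: the optimal (minimal) capacity $\Ca^{\star}$ of \Cref{prob:plain} for $\mdp,\target,m$ equals the optimal value $c^{\star}=\min\cmax{H^{\star}}$ of \Cref{prob:mincap_scc} for $G=G_{\mdp}^{\target}$ and $n=m$, and each direction of the argument is constructive, so an optimal solution of one problem yields an optimal solution of the other. I would establish $\Ca^{\star}\le c^{\star}$ and $c^{\star}\le\Ca^{\star}$ separately.

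For $\Ca^{\star}\le c^{\star}$, take an optimal $H^{\star}$ with maximal SCCs $Q_1,\dots,Q_k$ ($k\le m$; pad the allocation with $m-k$ empty target sets, which impose no constraint) and set $T_i=V(Q_i)\subseteq\target$. For a non-trivial $Q_i$, fix a closed walk $t^{(0)}\to t^{(1)}\to\cdots\to t^{(\ell-1)}\to t^{(\ell)}=t^{(0)}$ through all vertices of $T_i$ using only edges of $E^{\star}$, so that $\mincap[\mdp]{t^{(j)}}{t^{(j+1)}}\le\cmax{H^{\star}}=c^{\star}$ on every leg. The agent, deployed at $t^{(0)}$, traverses this walk forever, following the witness strategy $\witness{t^{(j)}}{t^{(j+1)}}$ on leg $j$ until the next target is reached and then starting the next leg. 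Two facts make this work, both leaning on the remark that every target is a reload state: first, entering a target resets the resource to full, so each leg behaves exactly like a fresh run of a witness strategy started with capacity $c^{\star}$, and since that witness is safe from its start with $\mincap[\mdp]{t^{(j)}}{t^{(j+1)}}\le c^{\star}$ and safety is monotone in the capacity, the whole run is safe with $c^{\star}$; second, each leg terminates almost surely, and composing these guarantees over the countably many legs (a countable product of conditional-probability-one events) shows the run cycles through $T_i$ forever, hence visits every target of $T_i$ infinitely often almost surely. Doing this for each $i$ covers $\target$, so capacity $c^{\star}$ suffices for \Cref{prob:plain}.

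For $c^{\star}\le\Ca^{\star}$, take an optimal allocation $\{T_1,\dots,T_m\}$ for \Cref{prob:plain} with witness strategies $\sigma_i$ of capacity $\Ca^{\star}$ that are safe and almost surely visit every target of $T_i$ infinitely often. The core step is $\mincap[\mdp]{t}{t'}\le\Ca^{\star}$ for all $i$ and all $t,t'\in T_i$. Pick a finite prefix $h$ of a $\sigma_i$-run that ends in $t$ and has positive probability, and consider the tail strategy $\sigma_i^{h}$ from $t$: it is safe with $\Ca^{\star}$, because the runs it induces are suffixes of safe runs and, since $t$ is a reload state, the resource at $t$ is full, so such a suffix is safe with $\Ca^{\star}$ exactly as a fresh run from $t$ would be; and it reaches $t'$ almost surely, because ``visit $t'$ infinitely often'' is a shift-invariant event of probability $1$, so conditioning on the positive-probability prefix $h$ keeps its probability $1$, and it implies reaching $t'$. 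Hence $\sigma_i^{h}$ witnesses $\mincap[\mdp]{t}{t'}\le\Ca^{\star}$. Now put $E^{\star}=\bigcup_i T_i\times T_i$ and $H^{\star}=(\target,E^{\star},C[E^{\star}])$: each non-empty $T_i$ induces the complete digraph on $T_i$, which is strongly connected, and no edge leaves $T_i$, so $\sccs{H^{\star}}$ has exactly one element per non-empty $T_i$ (at most $m$), while $\cmax{H^{\star}}=\max_i\max_{t,t'\in T_i}\mincap[\mdp]{t}{t'}\le\Ca^{\star}$. Thus $H^{\star}$ is feasible for \Cref{prob:mincap_scc} with value $\le\Ca^{\star}$. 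Combining the two inequalities yields $\Ca^{\star}=c^{\star}$, and the two constructions transfer optima.

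I expect the main obstacle to be the first direction's ``strategy surgery'': arguing that gluing witness strategies at reload states keeps the composed strategy safe (precisely where the all-targets-are-reload-states assumption is used, since the resource resets at every target) and that the run then makes almost-sure progress through all infinitely many legs. A more delicate secondary point is the treatment of trivial or singleton SCCs: an SCC $\{t\}$ must carry the self-loop $(t,t)$, and one has to argue that its cost $\mincap[\mdp]{t}{t}$ really is the capacity an agent needs to revisit $t$ infinitely often; this again hinges on $t$ being a reload state, and spelling it out carefully (possibly under a mild well-behavedness assumption) is where I would be most cautious.
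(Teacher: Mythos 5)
Your proposal is correct and follows essentially the same route as the paper's proof: SCCs of $H^*$ become the allocation, agents cycle through each SCC by concatenating witness strategies at reload states, and an allocation cheaper than $\cmax{H^*}$ would yield a subgraph contradicting optimality. The one place you go beyond the paper is the converse direction, where the paper simply asserts that an optimal allocation lets each agent ``move in a cycle'' with capacity $c$ and takes the edges of those cycles, whereas you rigorously derive $\mincap[\mdp]{t}{t'}\le\Ca^{\star}$ for all pairs in each $T_i$ via tail strategies and shift-invariance of the B\"uchi event and then use the complete digraph on each $T_i$ --- a more careful justification of the same step.
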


\begin{proof}
The solution of Problem \ref{prob:mincap_scc} for $G_{\mdp}^{\target}=(\target, \target\times \target, \mincapName_{\mdp}[\target\times \target])$ is a subgraph $H^*=(T, E^*, \mincapName_{\mdp}[E^*])$ of $G_{\mdp}^{\target}$ that minimizes the maximal cost of the edges in $\sccs{H^*}$ while ensuring the number of SCCs is $m$.
The target allocation needed to solve Problem~\ref{prob:plain} is $\allocation=\{V(Q) \mid Q \in \sccs{H^*}\}$, which is the sets of vertices of SCCs of $H^*$. 
The capacity needed to fulfill the objective is equal to $\cmax{H^*}$.

\textbf{$\allocation$ is a valid allocation.}
By definition of the SCC decomposition, the sets in $\{V(Q)\mid Q\in \sccs{H^*}\}$ are disjoint and their union is equal to $\target$.
Moreover, each SCC $Q \in \sccs{H^*}$ contains a cycle such that the cycle visits all vertices from $V(Q)$. 
An agent with capacity $\cmax{Q}$ is able to follow, finish, and repeat this cycle infinitely many times in the consumption MDP $\mdp$ with probability 1.
To be more specific, to move from $t_1$ to $t_2$, the agent follows the witness strategy $\witness{t_1}{t_2}$ that is  used to compute $\mincap[\mdp]{t_1}{t_2} \leq \cmax{Q}$. As $\{V(Q)\mid Q\in \sccs{H^*}\}$ is a target allocation, all targets are covered by an agent.

\textbf{The solution is optimal.}
We now show that the allocation defined by $H^*$ is optimal for $\mdp$, $\target$, and $m$. 
Suppose that this allocation is not optimal and there exists some other target allocation $\allocation'$ in $\Targets{\target}{m}$ that requires a capacity $c < \cmax{H^*}$. 
Therefore, each agent must be able to move in a cycle between their targets from $\allocation'$ with capacity $c$. 
Let $E'$ be a set of edges defined by these cycles.
The graph $H=(\target, E', \mincapName_{\mdp}[E'])$ which consists exclusively of these cycles is by construction a subgraph of $G_{\mdp}^{\target}$ with $m$ SCCs. Moreover, we have $\cmax{H}=c < \cmax{H^*}$, which is a contradiction to $H^*$ being the optimal solution of Problem~\ref{prob:mincap_scc}.

Therefore, we conclude that we can solve Problem \ref{prob:plain} for $\mdp$,  $\target$, and $m$ by solving Problem \ref{prob:mincap_scc} for  $G=G_{\mdp}^{\target}$ and $n=m$. 
\end{proof}

\begin{remark}
If there is a \emph{trivial} SCC $Q \in \sccs{H^*}$, meaning $Q$ consists of only a single vertex $c$ and $(c, c)\notin E[Q]$, the agent can visit the target $c$ infinitely-often with a minimal capacity of $\mincap[\mdp]{c}{c}$. 
We also note that the minimal capacity for solving Problem \ref{prob:plain} is  $\mincap[\mdp]{c}{c}$, if there is such a trivial SCC $Q$, as we already include the edge $(c, c)$ and its cost in the graph $G$.
\end{remark}

%\mc{fix trivial scc in problem 1, the agents can stay in the targets, minimal capacity that we need}

\subsection{Solution of Minimal-Capacity Multi-Agent Routing}
In this section, we introduce the problem called \emph{minimal-cost SCC matching}, and present the reduction of minimum-capacity multi-agent routing problem into this problem.

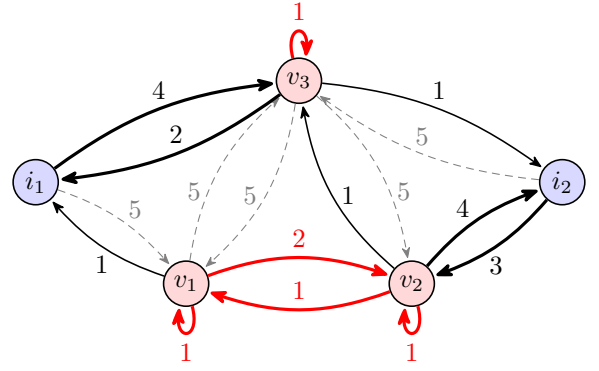
\begin{figure}[t]
\centering
\begin{tikzpicture}[automaton]
\tikzstyle{I}=[fill=blue!15]
\tikzstyle{V}=[fill=red!15]
\tikzstyle{removed}=[densely dashed,draw=gray,thin,gray]
\tikzstyle{assignment}=[very thick, red]
\tikzstyle{bottleneck}=[very thick]

\node[state,I]  (I1)   at (1,1.35)  {$i_1$};
\node[state,V]  (V1)   at (3,0.0)  {$v_1$};
\node[state,V]  (V2)   at (6,0.0)  {$v_2$};
\node[state,V]  (V3)   at (4.5,2.7)  {$v_3$};
\node[state,I]  (I2)   at (8,1.35)  {$i_2$};

\path[draw,->,assignment]  (V1) edge[loop below] node[below] {$1$} (V1);
\path[draw,->,assignment]  (V2) edge[loop below] node[below] {$1$} (V2);
\path[draw,->,assignment]  (V3) edge[loop above] node[above] {$1$} (V3);
\path[draw,->,assignment] (V1) edge[bend left,out=20, in=160] node[above] {$2$} (V2);
\path[draw,->,assignment] (V2) edge[bend left,out=20, in=160] node[above] {$1$} (V1);
\path[draw,->] (V2) edge[bend left,out=20, in=160] node[right] {$1$} (V3);
\path[draw,->,removed] (V3) edge[bend left,out=20, in=160] node[right,xshift=0.12cm,yshift=-0.3cm] {$5$} (V2);
\path[draw,->,removed] (V1) edge[bend left,out=20, in=160] node[left,xshift=-0.12cm,yshift=-0.3cm] {$5$} (V3);
\path[draw,->,removed] (V3) edge[bend left,out=20, in=160] node[left] {$5$} (V1);
\path[draw,->,bottleneck] (I1) edge[bend left,out=15, in=165] node[above] {$4$} (V3);
\path[draw,->,bottleneck] (V3) edge[bend left,out=15, in=165] node[above] {$2$} (I1);
\path[draw,->,removed] (I1) edge[bend left,out=15, in=165] node[above,xshift=0.2cm,yshift=-0.1cm] {$5$} (V1);
\path[draw,->] (V1) edge[bend left,out=15, in=165] node[below] {$1$} (I1);
\path[draw,->,bottleneck] (I2) edge[bend left,out=15, in=165] node[below] {$3$} (V2);
\path[draw,->,bottleneck] (V2) edge[bend left,out=15, in=165] node[above,xshift=-0.2cm,yshift=-0.1cm] {$4$} (I2);
\path[draw,->,removed] (I2) edge[bend left,out=15, in=165] node[above] {$5$} (V3);
\path[draw,->] (V3) edge[bend left,out=15, in=165] node[above] {$1$} (I2);
\end{tikzpicture}
\caption{$G=(\{i_1, i_2, v_1, v_2, v_3\}, E, C)$ with $C(a,b)$ given by the label of $(a,b)$. We denote the edges in $E$ by all edges, in $E[H^*]$ by non-dashed edges, the assignments by thick edges. The edges that are within one of the SCCs of $H^*$ are red-colored.}
\label{fig:bottleneck_scc}
\end{figure}%

\begin{problem}[Minimal-cost SCC matching] \label{prob:mincap_bottleneck}
%\emergencystretch 3em %fix for overflow in "nonempty sets" 
Let $V$ be a nonempty set and let $I \subsetneq V$ be a nonempty proper subset of $V$, let $V'=V\setminus I$, and let $G=(V, (V \times V) \setminus (I \times I), C)$ be a graph with some cost function $C$.
We want to find a subgraph $H^*=(V,E^*,C[E^*])$ of $G$ such that, while minimizing $\cmax{H^*}$, there exists a matching $M$ with $|M|=|\sccs{H^*[V']}|$ in the bipartite graph $B(H^*, I, V')=(\sccs{H^*[V']} \dunion I, E')$ where we treat the SCCs of $H^*[V']$ as vertices and $E' \subseteq \sccs{H^*[V']} \times I$ is defined as
\[
\left\{ 
  (Q, i) \mid
  \exists q_1, q_2 \in V(Q) \text{ such that } (q_1, i) \in E^* \text{ and } (i, q_2) \in E^* 
\right\}.
\]
\end{problem}

\begin{figure}[t]
\centering
\begin{tikzpicture}[automaton]
\tikzstyle{I}=[fill=blue!15]
\tikzstyle{V}=[fill=red!15]
\tikzstyle{removed}=[densely dashed,draw=gray,thin,gray]
\tikzstyle{assignment}=[very thick, red]
\tikzstyle{bottleneck}=[very thick]

\node[state,I]  (I1)   at (1,1.35)  {$i_1$};
\node[state,V]  (V1)   at (3,0.0)  {$v_1$};
\node[state,V]  (V2)   at (6,0.0)  {$v_2$};
\node[state,V]  (V3)   at (4.5,2.7)  {$v_3$};
\node[state,I]  (I2)   at (8,1.35)  {$i_2$};

\path[draw,->,assignment]  (V1) edge[loop below] node[below] {$1$} (V1);
\path[draw,->,assignment]  (V2) edge[loop below] node[below] {$1$} (V2);
\path[draw,->,assignment]  (V3) edge[loop above] node[above] {$1$} (V3);
\path[draw,->,assignment] (V1) edge[bend left,out=20, in=160] node[above] {$2$} (V2);
\path[draw,->,assignment] (V2) edge[bend left,out=20, in=160] node[above] {$1$} (V1);
\path[draw,->, assignment] (V2) edge[bend left,out=20, in=160] node[right] {$1$} (V3);
\path[draw,->,removed] (V3) edge[bend left,out=20, in=160] node[right,xshift=0.12cm,yshift=-0.3cm] {$5$} (V2);
\path[draw,->,removed] (V1) edge[bend left,out=20, in=160] node[left,xshift=-0.12cm,yshift=-0.3cm] {$5$} (V3);
\path[draw,->,assignment] (V3) edge[bend left,out=20, in=160] node[left, draw, rounded corners, thin, inner sep=3pt, outer sep=4pt] {$\textbf{2}$} (V1);
\path[draw,->,removed] (I1) edge[bend left,out=15, in=165] node[above] {$4$} (V3);
\path[draw,->] (V3) edge[bend left,out=15, in=165] node[above] {$2$} (I1);
\path[draw,->,removed] (I1) edge[bend left,out=15, in=165] node[above,xshift=0.2cm,yshift=-0.1cm] {$5$} (V1);
\path[draw,->] (V1) edge[bend left,out=15, in=165] node[below] {$1$} (I1);
\path[draw,->,bottleneck] (I2) edge[bend left,out=15, in=165] node[below] {$3$} (V2);
\path[draw,->,removed] (V2) edge[bend left,out=15, in=165] node[above,xshift=-0.2cm,yshift=-0.1cm] {$4$} (I2);
\path[draw,->,removed] (I2) edge[bend left,out=15, in=165] node[above] {$5$} (V3);
\path[draw,->,bottleneck] (V3) edge[bend left,out=15, in=165] node[above] {$1$} (I2);
\end{tikzpicture}
\caption{Graph $G'$ that differs from $G$ in \cref{fig:bottleneck_scc} only in $C(v_3, v_1)$ which is $2$ in $G'$ compared to $5$ in $G$. We highlight the change by enclosing the cost in a frame.}
\label{fig:bottleneck_opt}
\end{figure}
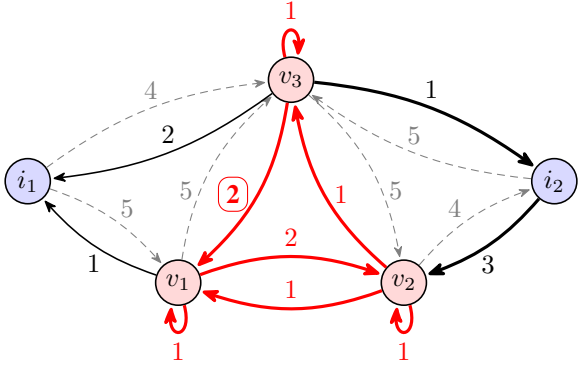

%\path[draw,->,red] (V3) edge[bend right,out=-20, in=200] node[above] {$\large\textbf{2}$} (V1);
\begin{example}\label{ex:minimal_capacity_bottleneck}
\Cref{fig:bottleneck_scc} shows the solution of the minimal-cost SCC matching problem for the graph $G=(\{i_1, i_2, v_1, v_2, v_3\}, E, C)$ with the sets $I=\{i_1, i_2\}$ and $V'=\{v_1, v_2, v_3\}$.
For clarity, we do not include some of the edges between the vertices in $I$ and $V'$.
The SCCs of $H^*[V']$ are $Q_1$ with vertices $V(Q_1)=\{v_1, v_2\}$ and $Q_2$, which contains only the vertex $v_3$.
The matching between SCCs of $H^*[V']$ and vertices of $I$ is $M=\{(Q_2,i_1),(Q_1, i_2)\}$ and the maximum cost in $H^*$ is $\cmax{H^*}=4$.
\end{example}

\begin{example}\label{ex:bottleneck_opt}
\Cref{fig:bottleneck_opt} shows the solution of the minimal-cost SCC matching problem for $G'$.
We note that $H^*[V']$ contains only one SCC and thus the matching obtained for $I$ and $\sccs{H^*[V']}$ does not involve $i_1$, and that $\cmax{H^*}=3$.
This example shows that it is indeed beneficial to seek for decompositions into less than $|I|$ SCCs to compute an optimal solution.
\end{example}

Let $\mdp$ be a consumption MDP and let $\Sinit$ (initial locations) and $\target$ (targets) be two disjoint sets of states in $\mdp$ and let $\states'=\Sinit\cup \target$. 
The graph $G_{\mdp}^{\target, \Sinit}$ is defined similarly as $G_{\mdp}^{\target}$ from the previous section, but also contains $\Sinit$ in the set of vertices and edges between each state in $\Sinit$ and each state in $\target$ (in both directions).
That is, 
$G_{\mdp}^{\target,\Sinit}=(\states', E,\mincapName_{\mdp}[E])$, where 
\[
E = (\states'\times \states')\setminus (\Sinit\times\Sinit).
\]

\begin{theorem}\label{thm:red-ii}
Given a consumption MDP $\mdp$ with a set of states $\states$, a set of targets $T\subsetneq S$, and a set disjoint from $\target$ of initial states $\Sinit\subsetneq S$, solving Problem~\ref{prob:initlocs} for $\mdp$,  $\target$, and $\Sinit$ is equivalent to solving Problem~\ref{prob:mincap_bottleneck} with $V=S'$, $I=\Sinit$, and $G=G_{\mdp}^{\target,\Sinit}$.
\end{theorem}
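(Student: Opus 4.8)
The plan is to mirror the two-part argument used for \Cref{thm:red-i}, now additionally bookkeeping the routes between initial states and target sets through the bipartite matching of \Cref{prob:mincap_bottleneck}. Recall that a solution of \Cref{prob:mincap_bottleneck} for $V=\states'$, $I=\Sinit$, and $G=G_{\mdp}^{\target,\Sinit}$ is a subgraph $H^*=(\states',E^*,\mincapName_{\mdp}[E^*])$ of $G_{\mdp}^{\target,\Sinit}$ together with a matching $M$ of size $|\sccs{H^*[\target]}|$ in $B(H^*,\Sinit,\target)$. From it I would read off a solution of \Cref{prob:initlocs}: set $m=|\sccs{H^*[\target]}|$, take the target allocation $\allocation=\{V(Q)\mid Q\in\sccs{H^*[\target]}\}$, and define the assignment $f\colon\Sinit\to\allocation$ by $f(i)=V(Q)$ whenever $(Q,i)\in M$ and $f(i)=\bot$ otherwise. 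Since $M$ is a matching of size $m=|\allocation|$ into $\Sinit$, the map $f$ is a well-defined injective partial assignment and $m\le|\Sinit|$; I would then claim that capacity $\cmax{H^*}$ suffices for $(\allocation,f)$.

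First I would verify feasibility. As in \Cref{thm:red-i}, every non-trivial $Q\in\sccs{H^*[\target]}$ contains a cycle through all of $V(Q)$ using only edges of $Q$, each such edge $(t_1,t_2)$ having cost $\mincap[\mdp]{t_1}{t_2}\le\cmax{H^*}$; since $\target\subseteq\reloads$, the resource is reloaded to $\cmax{H^*}$ at every target, so the agent assigned to $Q$ can patrol $V(Q)$ forever with capacity $\cmax{H^*}$ by concatenating the witness strategies $\witness{t_1}{t_2}$ (trivial $Q$ are handled as in the remark after \Cref{thm:red-i}). For the matched $i$ with $(Q,i)\in M$, the definition of the bipartite edge set yields $q_1,q_2\in V(Q)$ with $(i,q_2),(q_1,i)\in E^*$; the agent first follows $\witness{i}{q_2}$ to enter $Q$, then patrols, and upon a return request it finishes the inter-target transit it is currently executing, routes inside $Q$ to $q_1$, and follows $\witness{q_1}{i}$ — every leg being safe with $\cmax{H^*}$. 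As the sets $V(Q)$ over $Q\in\sccs{H^*[\target]}$ partition $\target$, every target is patrolled almost surely, so $(\allocation,f,\cmax{H^*})$ is feasible for \Cref{prob:initlocs}.

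For the converse and for optimality, I would start from an optimal solution of \Cref{prob:initlocs}: a number $m$, an allocation $\allocation'=\{T_1,\dots,T_m\}$ (we may assume all parts nonempty, dropping empty ones), an injective assignment $f'$, and capacity $c$. Since every target must be patrolled and $f'$ is injective, $f'$ restricts to a bijection from $\dom(f')$ onto $\allocation'$; write $i_j=(f')^{-1}(T_j)$. The structural fact replacing the cycle extraction of \Cref{thm:red-i} is that, because $\target\subseteq\reloads$, whenever the $T_j$-agent sits at a target $t\in T_j$ its resource is reloaded to $c$ and its residual strategy is safe with $c$, almost surely revisits every other target of $T_j$, and, on request, returns to $i_j$; hence $\mincap[\mdp]{t}{t'}\le c$ for all $t,t'\in T_j$ and $\mincap[\mdp]{t}{i_j}\le c$, and likewise $\mincap[\mdp]{i_j}{t}\le c$ since the agent starts at $i_j$ with full resource and reaches $t$ safely almost surely. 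I would then let $H'$ be the subgraph of $G_{\mdp}^{\target,\Sinit}$ containing, for each $j$, all edges inside $T_j\times T_j$ and a pair $(i_j,q_j),(q_j,i_j)$ for a fixed $q_j\in T_j$, and no other edge. Then $\sccs{H'[\target]}=\{T_1,\dots,T_m\}$, $\cmax{H'}\le c$, and $\{(T_j,i_j)\}_j$ is a matching of size $|\sccs{H'[\target]}|$ in $B(H',\Sinit,\target)$, so $H'$ is feasible for \Cref{prob:mincap_bottleneck} with value at most $c$. Because the number of SCCs is unconstrained on both sides, this also captures the case, illustrated in \Cref{ex:bottleneck_opt}, where deploying fewer than $|\Sinit|$ agents is beneficial; combining the two directions gives that the optimal values coincide and that the reductions carry optimal solutions to optimal solutions.

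The step I expect to be the main obstacle is not the matching combinatorics but the careful translation between strategies and graph edges in both directions: on the feasibility side, arguing that an arbitrarily-timed return request can be honoured while remaining safe with $\cmax{H^*}$ — which works because the agent can first complete the inter-target transit it is currently executing (it is then still following a witness run and will reach the next target safely) before routing back through $Q$ and to $i$ — and on the converse side, extracting the three families of $\mincapName_{\mdp}$-bounds above from a single patrolling-and-returning strategy, which relies essentially on the standing assumption $\target\subseteq\reloads$ so that at every target the continuation behaves as a fresh witness-type strategy with capacity $c$.
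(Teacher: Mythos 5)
Your proposal is correct and follows essentially the same route as the paper's proof: the same extraction of $m$, $\allocation$, and $f$ from $H^*$ and the matching $M$, the same feasibility argument via the witness strategies $\witness{i}{q_2}$ and $\witness{q_1}{i}$ obtained from the definition of the bipartite edge set, and the same optimality argument by building a cheaper feasible subgraph $H'$ from a hypothetical better solution. Your converse direction is in fact spelled out in more detail than the paper's one-line contradiction, but the underlying idea is identical.
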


\begin{proof}
The solution of Problem \ref{prob:mincap_bottleneck} for $V=S'=\Sinit \cup \target$, for $I=\Sinit$, and for $G_{\mdp}^{\target,\Sinit}=(\states', E,\mincapName_{\mdp}[E])$, where $E = (\states'\times \states')\setminus (\Sinit\times\Sinit)$, is a subgraph $H^*=(S', E^*, \mincapName_{\mdp}[E^*])$ and a matching $M$ that matches each SCC $Q$ of $H^*[\target]$ (an induced subgraph of $H^*$) to a vertex in $s\in\Sinit$.
To solve Problem~\ref{prob:initlocs}, we set $m=|M|$ and the allocation $\allocation = \{V(Q) \mid Q \in \sccs{H^*[\target]}\}$ to be the vertices of SCCs in the induced subgraph $H^*[\target]$ of $H^*$, and finally, we construct the assignment $f$ from $M$ as $f(i)=V(Q)$ if $(Q,i)\in M$ for some $Q$ and $f(i)=\bot$ otherwise.
The minimal capacity needed to fulfill the objective is equal to $\cmax{H^*}$.

For example, for the subgraph in \cref{ex:minimal_capacity_bottleneck}, the assignment $f$ is given by $f(i_1)=\{v_3\}$ and $f(i_2)=\{v_1, v_2\}$. 
Similarly, for the subgraph in \cref{ex:bottleneck_opt}, the assignment $f$ is given by $f(i_1)=\bot$ and $f(i_2)=\{v_1, v_2, v_3\}$.

\textbf{The assignment $f$ with the allocation $\allocation$ is feasible with capacity $\cmax{H^*}$.}
Again, $\allocation=\{V(Q) \mid Q\in\sccs{H^*[\target]}\}$ is a valid allocation.
Based on the assignment $f$, the agent $u_i$ with initial location $i\in\Sinit$ is assigned to visit the targets $f(i)=V(Q)\in\allocation$.
If $f(i)=\bot$, the agent does nothing and there is nothing to show. 
Otherwise, we only need to show that $a_i$ with capacity $\cmax{H^*}$ can reach $Q$ from $i$ and also return back to $i$: when in $Q$, the agent is able to repeatedly visit vertices in $Q$ by the arguments used to prove \cref{thm:red-i}.
We have that $f(i)=Q$ only if $(Q,i)\in M$ and this is only possible, by the definition of $E'$ used to construct the bipartite graph $B(H^*, \Sinit, \target)$ in Problem~\ref{prob:mincap_bottleneck}, if there are some $q_1, q_2\in V(Q)$ such that $(q_1,i)\in E^*$ and $(i,q_2)\in E^*$. 
This implies that $\mincap[\mdp]{q_1}{i} \leq \cmax{H^*}$ and $\mincap[\mdp]{i}{q_2} \leq \cmax{H^*}$.
Therefore, $a_i$ can follow $\witness{i}{q_2}$ to reach $Q$ and, when requested, $a_i$ can reach $q_1$ and then follow $\witness{q_1}{i}$.

\textbf{The solution is optimal.}
Similarly to proof of \cref{thm:red-i}, suppose that there exist some other $m'$, target allocation $\allocation' \in \Targets{T}{m'}$, and an assignment $f'$ that induces a required capacity $c'$ that is lower compared to $\cmax{H^*}$.
Then, we could use $\allocation'$ and $f'$ to create a subgraph $H'$ such that $\cmax{H'}=c'<\cmax{H^*}$, which is a contradiction to $H^*$ and $f^*$ being the optimal solution of Problem 4.
Therefore, we conclude that we can solve Problem~\ref{prob:initlocs} for $\mdp$, $\target$, and $\Sinit$  by solving Problem 4 for $V=\target\cup\Sinit$, for $I=\Sinit$, and for $G=G_{\mdp}^{\target,\Sinit}$.
\end{proof}

\subsection{Variants of the Problems}
In this section, we list some potential variants and extensions of the problems that we introduced and discuss how to implement these extensions while computing a task allocation for minimal capacity. 

\paragraph{Allocating sets of targets to the same agent.}
Let $\hat{V} \subsetneq \target$ be a nonempty proper subset of $\target$ and suppose that the target allocation $\allocation$ requires to assign all targets in $\hat{V}$ to the same agent.
We can ensure such a target allocation by setting the costs of the edges in $\hat{V} \times \hat{V}$ in the graph $G$ to be $0$, ensuring that the targets in $\hat{V}$ will belong to the same SCC.
Therefore, the targets in $\hat{V}$ are always assigned to the same agent.
We also note that this construction can be extended to multiple sets of targets.

\paragraph{Target sequencing.}
Given two targets $t_1, t_2 \in \target$, suppose that we require an agent to visit $t_2$ immediately after visiting $t_1$.
We can ensure such a target sequencing by computing the minimal capacity $\mincap[\mdp]{t}{(t_1,t_2)}$ and the witness strategy $\witness{t}{(t_1,t_2)}$ that reaches $t_1$ and $t_2$ in this order from $t$. 
We then set the cost of the edges $(t, t_2)$ to $\mincap[\mdp]{t}{(t_1,t_2)}$ and $(t_2, t)$ to $\mincap[\mdp]{t_2}{t}$. 
Therefore, we ensure that an agent visits $t_2$ immediately after visiting $t_1$ while minimizing the required capacity.
Similar to the previous variant, we can also have multiple sets of targets and sequences with more than two targets.

\paragraph{Requiring allocation of targets to different agents.}
Let $\hat{V} \subseteq \target$ be a set of states where no two states from $\hat{V}$ can belong to one set in the final allocation $\allocation$, meaning each target in $\hat{V}$ should be allocated to different agents.
For Problem~\ref{prob:mincap_scc}, we can compute a target allocation while satisfying the above requirements by solving Problem~\ref{prob:mincap_bottleneck} with $I=\hat{V}$ to compute a matching between $\hat{V}$ and $\sccs{H^*[\target\setminus \hat{V}]}$.
However, for Problem~\ref{prob:mincap_bottleneck}, this approach would require computing a maximal 3-dimensional matching in a tripartite graph, which is known to be NP-hard~\cite{karp1972reducibility}.

\section{Solving the Graph-Theoretic Problems}\label{sec:solution}
In this section, we discuss our solution approach for solving the graph-theoretic problems that were introduced in \cref{sec:approach}.
\Cref{alg:scc,alg:bottleneck} solve Problems~\ref{prob:mincap_scc}~and~\ref{prob:mincap_bottleneck}, respectively, in time that is polynomial with respect to the size of input graphs. 
\Cref{sec:algo-scc,sec:algo-bottleneck} discuss the two algorithms and prove their correctness.
In essence, both algorithms remove edges with the highest cost from the input graph, until a stopping criterion is met.

\subsection{Solving Minimal-Cost SCC Decomposition}
\label{sec:algo-scc}

Obviously, a graph with no edges minimizes the maximum cost of the graph. 
Therefore, \cref{alg:scc} returns such a subgraph for graphs with at most $n$ vertices.
Each iteration of the while-loop stores the current state of $G$ into $H^*$ and subsequently removes the edge with highest cost from $G$.
The stopping criterion in \cref{alg:scc} is solely the number of SCCs.
Whenever $G$ has more than $n$ SCCs, the algorithm returns $H^*$ (which is $G$ from the previous iteration with at most $n$ SCCs) and terminates.

\begin{algorithm}[t]
\SetAlgoLined
\KwIn{A graph $G=(V, V\times V, C)$ and a number $n$}
\KwOut{A subgraph $H^*=(V, E^*, C[E^*])$ that minimizes the maximum cost in $\sccs{H^*}$ and $|\sccs{H^*}|\leq n$}
\If{$|V|\leq n$}{
\Return $(V,\emptyset,C[\emptyset])$\;\label{line:alg1_return_vertex}}
% $J=\underset{e \in E(T) }{\text{max}} C(e)$\;
% \SetKwRepeat{Repeat}{repeat}{until}
% \Repeat{1111}{ $e \leftarrow$ Edge with highest cost in $G$\;
%  $G \leftarrow G\setminus e$\;
%  $Q \leftarrow\sccs{G}$\;}

 \While{$|\sccs{G}|\leq n$\label{line:alg1_scc_check} }{
  $H^*\leftarrow G$\;
 %Remove the edge $e \in \scc_i$ in $E_T$ with the highest cost, update $\targetgraph$\;
 %Check whether removing $e$ breaks $\scc_i$ by computing the SCCs of $\scc_i\setminus e$, update\;
 $e \leftarrow$ Edge with highest cost in $E[G]$\label{line:alg1_max_cost}\;
 $G \leftarrow G\setminus e$\;\label{line:alg1_remove_edge}
 }
 \Return $H^*$\;
 \caption{Minimal-cost SCC decomposition (Problem \ref{prob:mincap_scc})}
 \label{alg:scc}
\end{algorithm}

\begin{theorem}
\label{thm:algo-scc}
\Cref{alg:scc} solves Problem~\ref{prob:mincap_scc} in time that is polynomial with respect to the size of $G$.
\end{theorem}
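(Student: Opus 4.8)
The plan is to verify three facts about the subgraph $H^*$ that \cref{alg:scc} outputs: it is produced in polynomial time, it is \emph{feasible} (a subgraph of $G$ on the full vertex set $V$ with $|\sccs{H^*}|\le n$), and it is \emph{optimal} (every feasible subgraph has maximum cost at least $\cmax{H^*}$). The structural fact driving everything is that the while-loop always deletes an edge of currently maximal cost, so the edges leave $G$ in some order $e_1,e_2,\dots,e_{|E|}$ with $C(e_1)\ge C(e_2)\ge\cdots$ (ties broken however the algorithm likes), and that deleting an edge can never decrease the number of maximal SCCs. Writing $G_j$ for the subgraph of $G$ whose edge set is $\{e_j,\dots,e_{|E|}\}$ (so $G_1=G$ and $G_{|E|+1}$ is edgeless), monotonicity gives $|\sccs{G_1}|\le|\sccs{G_2}|\le\cdots\le|\sccs{G_{|E|+1}}|=|V|$, hence there is a threshold index $j^*$ with $|\sccs{G_j}|\le n$ exactly for $j\le j^*$. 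Tracing the loop then shows it assigns $G_1,G_2,\dots,G_{j^*}$ to $H^*$ in successive iterations and exits when the guard first fails, at $G_{j^*+1}$; so the returned graph is $H^*=G_{j^*}$, with $\cmax{H^*}=C(e_{j^*})$. The loop is entered at least once because the complete graph $G_1$ is strongly connected, i.e.\ $|\sccs{G_1}|=1\le n$ (we assume $n\ge 1$), and the case $|V|\le n$ is discharged separately by returning the edgeless graph, which has $|V|\le n$ maximal SCCs and trivially minimizes the maximum cost.

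For the running time, each iteration deletes one edge, so the loop runs at most $|E|+1=|V|^2+1$ times; each iteration only recomputes $\sccs{G}$ (linear in the graph size, e.g.\ by Tarjan's algorithm) and scans for a maximum-cost edge, both polynomial. Since $G$ is complete, its size is $\Theta(|V|^2)$, so the total time is polynomial in the size of $G$. Feasibility is a loop invariant: $H^*$ is (re)assigned only at the top of the loop body, which is reached only when the guard $|\sccs{G}|\le n$ holds, so $|\sccs{H^*}|\le n$; and $G$ never loses vertices, so $H^*=(V,E^*,C[E^*])$ is a genuine subgraph of $G$ of the required shape.

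Optimality is the heart of the argument. Suppose, toward a contradiction, that some subgraph $H=(V,E',C[E'])$ of $G$ has $|\sccs{H}|\le n$ and $\cmax{H}<\cmax{H^*}=C(e_{j^*})$. Then every edge of $H$ has cost strictly below $C(e_{j^*})$, so none of $e_1,\dots,e_{j^*}$ (whose costs are all $\ge C(e_{j^*})$) can appear in $H$; hence $E'\subseteq\{e_{j^*+1},\dots,e_{|E|}\}$ and $H$ is a subgraph of $G_{j^*+1}$ on the same vertex set $V$. Because deleting edges only refines the partition of $V$ into maximal SCCs, $|\sccs{H}|\ge|\sccs{G_{j^*+1}}|$; but by the choice of the threshold $j^*$ we have $|\sccs{G_{j^*+1}}|>n$, so $|\sccs{H}|>n$, a contradiction. (Note $j^*\le|E|$ automatically, since $G_{|E|+1}$ is edgeless with $|V|>n$ components, so $C(e_{j^*})$ and $G_{j^*+1}$ are well defined.) Hence $\cmax{H^*}$ is the minimum over all feasible subgraphs, and together with feasibility this proves the theorem.

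The only step I expect to need genuine (if short) care is the monotonicity lemma used twice above: if $H$ is a subgraph of $G'$ with $V(H)=V(G')$, then $|\sccs{H}|\ge|\sccs{G'}|$. This holds because any directed path in $H$ is also a directed path in $G'$, so any two vertices mutually reachable in $H$ are mutually reachable in $G'$; thus the maximal-SCC partition of $H$ refines that of $G'$. Besides this, the one place where one must be slightly precise is the handling of ties among edge costs, so that ``the edges of cost $<C(e_{j^*})$'' is correctly identified with a suffix of the deletion order $e_1,\dots,e_{|E|}$; everything else is routine bookkeeping over the loop.
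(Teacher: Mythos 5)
Your proof is correct and follows essentially the same strategy as the paper's: greedily delete maximum-cost edges, observe the returned subgraph is the last one satisfying $|\sccs{H^*}|\le n$, and derive optimality by noting that any cheaper feasible subgraph would have to avoid every deleted edge and the last-removed edge, hence be a subgraph of the post-termination graph, which has more than $n$ SCCs. The only difference is presentational: you make explicit the monotonicity lemma (edge deletion cannot decrease the number of maximal SCCs) that the paper's argument uses implicitly.
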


\begin{proof}
\textbf{Complexity.} The decomposition of $G$ into maximal SCCs can be computed using Tarjan's algorithm in linear time with respect to the number of nodes and edges~\cite{tarjan1972depth}. Sorting edges based on cost can be done in time $\log |E| \cdot |E|$ and choosing subsequently the edge with the highest cost is constant. Moreover, sorting can be done before entering the while loop. Overall, we have at most $|E|$ iterations where each iteration needs time linear in the number of edges, which sums up to quadratic complexity. 

\textbf{Correctness.} Let $E^*$ be the set of edges in $H^*$ after termination of the algorithm, let $e\in E^*$ be the edge selected and removed in the last iteration from $G$, let $E'=V\times V \setminus E^*$, and let $c=C(e)=\cmax{H^*}$. 
It holds that $c\leq C(e')$ for all $e'\in E'$. 
Suppose, for the sake of contradiction, that there exist a subgraph $H=(V,E'',C[E''])$ of $G$ such that $\cmax{H} < c$ and $|\sccs{H}| \leq n$. 
The set $E''$ cannot contain any edge from $E' \cup \{e\}$, otherwise $\cmax{H} \geq c$.
But then $H$ has more than $n$ SCCs, which is a contradiction.\end{proof}

\subsection{Solving Minimal-Cost SCC Matching}
\label{sec:algo-bottleneck}

Part of the solution for Problem~\ref{prob:mincap_bottleneck} is analogous to the one for Problem~\ref{prob:mincap_scc}: to find a subgraph $H^*$ of a complete graph with vertices $V$; the number of SCCs in this subgraph is implicitly limited by $|I|$.
On top of that, we need to take the vertices $I$ into account, meaning we need to find a matching $M^*$ in the bipartite graph $B(H^*, I, V')$ such that $|M^*|=|\sccs{G[V']}|$.

The while-loop of \Cref{alg:bottleneck} repeatedly stores the current state of $G$ to $H^*$ and the current matching $M$ to $M^*$ (\cref{line:alg2_update_subgraph}), removes an edge with the maximal cost from $G$ (\cref{line:alg2_remove_edge}), and computes a maximum matching in the bipartite graph $B(G, I, V')$ (\cref{line:alg2_matching}).
The stopping criterion here is on the number of elements in $M$ and the number of SCCs of $G[V']$.
If some SCC of $G[V']$ cannot be matched to some counterpart from $I$ (the size of $M$ is lower than the number of SCCs in $G[V']$), the algorithm returns $H^*$ ($G$ from the previous iteration) and $M^*$, and terminates.

\begin{algorithm}[t]
\SetAlgoLined
\KwIn{Two nonempty sets $I \subsetneq V$, and a graph $G=(V, (V \times V) \setminus (I \times I), C)$.}
\KwOut{A subgraph $H^*=(V,E^*,C[E^*])$ and a matching $M^*$ that minimizes $\cmax{H^*}$.}
$V' \leftarrow V\setminus I$\;
$M \leftarrow \MatchingFun{B(G, I, V')}$\label{line:alg2_init_match}\;
\While{$|M| = |\sccs{G[V']}|$ }{
  $H^*\leftarrow G$; $M^* \leftarrow M$\;\label{line:alg2_update_subgraph}
  $e \leftarrow$ Edge with highest cost in $E[G]$\;\label{line:alg2_max_cost}
  $G \leftarrow G\setminus e$\;\label{line:alg2_remove_edge}
  $M\leftarrow \MatchingFun{B(G, I, V')}$\label{line:alg2_matching}\;
}
 \Return $H^*, M^*$;
 \caption{Minimal-cost SCC matching (Problem \ref{prob:mincap_bottleneck})}
 \label{alg:bottleneck}
\end{algorithm}

\begin{theorem}
\label{thm:algo-bottleneck}
\Cref{alg:bottleneck} solves Problem~\ref{prob:mincap_bottleneck} in time that is polynomial with respect to the size of $G$.
\end{theorem}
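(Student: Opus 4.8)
The plan is to mirror the structure of the proof of \cref{thm:algo-scc}: bound the running time, then prove correctness in two parts --- \emph{feasibility} of the returned pair $(H^*,M^*)$ and \emph{optimality} of $\cmax{H^*}$. For the complexity bound, I would first sort the edges of $G$ by cost once, before the loop, in time $O(|E|\log|E|)$, so that retrieving ``the edge with highest cost in $E[G]$'' costs constant time per iteration. Since each pass of the while-loop deletes exactly one edge, there are at most $|E|$ iterations. Each iteration performs: a decomposition of $G[V']$ into maximal SCCs by Tarjan's algorithm in time $O(|V|+|E|)$; the construction of $B(G,I,V')$, which for every $i\in I$ reduces to scanning the edges between $i$ and $V'$ and marking which SCCs of $G[V']$ contain both a vertex with an edge into $i$ and a vertex reachable by an edge from $i$, hence is polynomial; and a maximum-matching computation on $B(G,I,V')$, e.g.\ by Hopcroft--Karp, polynomial in $|I|+|\sccs{G[V']}|$ and therefore in $|V|$. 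Multiplying the polynomial per-iteration cost by the $O(|E|)$ iterations yields a polynomial bound in the size of $G$.

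For \textbf{feasibility}, I would first observe that the loop body runs at least once: for the complete input graph $G$, the edge set of $G[V']$ is exactly $V'\times V'$ (because $V'\cap I=\emptyset$), so $G[V']$ is strongly connected and $|\sccs{G[V']}|=1$; moreover every $i\in I$ is adjacent in $B(G,I,V')$ to that single SCC since all edges between $I$ and $V'$ are present, so the initial matching on \cref{line:alg2_init_match} has size $1$ and the guard holds. Hence $H^*$ and $M^*$ get assigned. Every time they are (re-)assigned, the loop guard ensures $|M|=|\sccs{G[V']}|$, and since $M$ is a \emph{maximum} matching in $B(G,I,V')$ attaining the maximal possible value $|\sccs{G[V']}|$, it matches every SCC of $G[V']$ to a distinct vertex of $I$. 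Thus $(H^*,M^*)$ meets the requirement of \cref{prob:mincap_bottleneck}.

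For \textbf{optimality}, let $c=\cmax{H^*}$, let $e^*$ be the edge removed in the iteration immediately following the last update of $H^*$ (so $C(e^*)=\cmax{H^*}=c$), and let $G'=H^*\setminus e^*$; the algorithm stopped because the guard fails for $G'$, i.e.\ $G'$ admits no matching of size $|\sccs{G'[V']}|$ in $B(G',I,V')$. Because the algorithm always deletes a highest-cost remaining edge, every edge of the input graph not in $E(G')$ has cost $\geq c$. Suppose, for contradiction, some subgraph $H=(V,E_H,C[E_H])$ of $G$ is feasible with $\cmax{H}<c$; then no edge of $E_H$ has cost $\geq c$, so $E_H\subseteq E(G')$, i.e.\ $H$ is a subgraph of $G'$ on the same vertex set. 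The heart of the argument is a \emph{monotonicity lemma}: adding edges (keeping $V$ fixed) preserves feasibility. Each SCC of $G'[V']$ is a union of SCCs of $H[V']$; sending every SCC $Q\in\sccs{G'[V']}$ to the $I$-partner (in the feasible matching of $H$) of one of the SCCs of $H[V']$ contained in $Q$ gives an injection into $I$, and the corresponding bipartite edge survives in $B(G',I,V')$ because a witnessing pair of vertices for that sub-SCC lies in $V(Q)$ and the relevant edges of $E_H$ lie in $E(G')$. Hence $G'$ would be feasible, contradicting termination. So no feasible subgraph beats $\cmax{H^*}$, and $H^*$ is optimal.

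The step I expect to be the main obstacle is the monotonicity lemma: one must check carefully that coarsening the SCC partition of $G'[V']$ relative to that of $H[V']$ does not break matchability --- that each bipartite adjacency witness for a sub-SCC transfers to the containing SCC, and that injectivity into $I$ is preserved when several sub-SCCs merge (choosing a single representative sub-SCC per merged component). Everything else is a routine adaptation of the \cref{thm:algo-scc} argument.
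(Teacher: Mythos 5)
Your proposal is correct and follows essentially the same route as the paper's proof: sort the edges once, delete them in descending order of cost, and obtain optimality by arguing that any feasible subgraph $H$ with $\cmax{H}<\cmax{H^*}$ would have its edges contained in $H^*\setminus e$ and would therefore certify feasibility of $H^*\setminus e$, contradicting the termination condition. The only substantive difference is that you explicitly state and prove the monotonicity lemma (each SCC of the larger graph is a union of SCCs of $H[V']$, and a representative sub-SCC per component transfers both the bipartite adjacency witnesses and injectivity into $I$), whereas the paper compresses this into the single clause ``otherwise we could find the requested matching also for $H^*\setminus e$''; your treatment of that step is the more complete one.
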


\begin{proof}
\textbf{Termination and complexity.} 
The algorithm clearly terminates because the maximum matching in a graph with no edges has size $0$, while $G[V']$ has always at least SCC. 
The complexity is analogous to the one of \cref{alg:scc} with the addition of the constructing $B(G, I, V')$ and the computation of maximum matching, in each iteration.
Creating the bipartite graph needs at most $|E|$ steps, and the maximum matching can be computed using the Hopcroft–Karp–Karzanov algorithm which needs asymptotically at most $|E'|\cdot\sqrt{|V' \cup I|}$ number of steps, where $E'$ is the set of edges in the bipartite graph with size at most $|E|$~\cite{hopcroft1973n}.
Overall, the algorithm has cubic worst-case complexity.

\textbf{Correctness.} 
Let $H^*=(V, E^*, C[E^*])$ be the subgraph of $G$ returned by \cref{alg:bottleneck} and $c = \cmax{H^*}$ and let $e \in E^*$ be the last edge removed on \cref{line:alg2_remove_edge}.
As the algorithm removes edges by their cost in descending order, each other subgraph $H$ of $G$ with $\cmax{H} < c$ must use the set of edges that is a proper subset of $E^*$ and, specifically, it can't contain $e$. 
This implies, that the maximum matching in $B(H, I, V')$ contains less edges than we have SCCs in $H[V']$, otherwise we could find the requested matching also for $H^* \setminus e$, which is a contradiction to the fact that \cref{alg:bottleneck} returned $H^*$ after termination.
\end{proof}

\subsection{Algorithmic Improvements}
In this section, we list the key improvements that we make as opposed to a naive implementation of the proposed algorithms. 

\begin{itemize}
  \item Both algorithms compute SCC decompositions in each iteration.
  In practice, one can reuse the SCC decomposition from the previous iteration and refine the decomposition only for the SCC affected by the edge removal (removing an edge between 2 SCCs does not affect any SCC).
\item Both algorithms, in essence, seek a lowest cost $c$ such that the stopping criterion of the while loop is still met by the graph that contains only edges with cost at most $c$. In practice, it is faster to right value of $c$ using a binary search instead of removing the edges one by one.
This improvement bounds the maximum number of iterations in  \Cref{alg:scc,,alg:bottleneck} by $\log|E|$ (opposed to $|E|$ of the presented algorithms).
\end{itemize}

% \newpage

\begin{figure}[b!t]
\def\gridsize{4}
\tikzstyle{strong} = []
\tikzstyle{strong} = []
\begin{center}  
\begin{tikzpicture}[scale=1]
  \tikzstyle{brace} = [decorate,
  decoration={brace, amplitude=4pt, raise=0.2ex}]
  \draw[gridworld, fill=background!10, step=.25] (0,0) grid (\gridsize,\gridsize) rectangle (0,0);
  
  \tikzstyle{target} = [fill, draw, red]
  \tikzstyle{reload} = [fill, draw, blue]
  \tikzstyle{init loc} = [fill, draw, blue]
  \tikzstyle{agent loc} = [agent, minimum size=4pt, state, rounded corners=3pt]
  
  \tikzstyle{state} = [minimum size=.25, xshift=.1255cm, yshift=.1245cm, transform shape]
  
  \node[agent loc] at (1, 3) (agent) {};
  \node[state, init loc] at (1, 3.5) {};
  
  \node[agent loc] at (.25, 2) (agent2) {};
  \node[state, init loc] at (.25, 2.5) {};
  
  \node[state, target] at (2.5, 1.5) {};
  \node[state, target] at (2, .5) {};
  \node[state, target] at (.5, .5) {};
  \node[state, target] at (3, 2.5) {};
  
  \draw[brace, overlay] (0, \gridsize) -- node[above=3pt, font=\scriptsize] {$K$} (\gridsize,\gridsize);
  \draw[brace, overlay] (\gridsize,\gridsize) -- node[right=3pt, font=\scriptsize] {$K$} (\gridsize, 0);
  
  \path[<->, yshift=-3pt, overlay] (0,0) edge node[below] {$x$} (\gridsize, 0);
  \path[<->, xshift=-3pt] (0,0) edge node[left] {$y$} (0, \gridsize);
  
\end{tikzpicture}%
\hfill
\begin{tikzpicture}[scale=1]
  \tikzstyle{strong} = [brown!80!black]
  
  \draw[gridworld, fill=background!10, step=1] (0,0) grid (\gridsize,\gridsize) rectangle (0,0);

  \node[agent] at (1.5, 2.5) (agent) {};

  \path[->, action edge]
  (agent)
    edge[out=90, in=200, side effect] +(1,1)
    edge[out=90, in=-20, side effect] +(-1,1)
    edge[main]
    node[action name, left, pos=.8] {$[1]$}
    +(0,1)
    edge[main, strong, shorten <=-2pt]
      node [action name, above right, outer sep=0pt, pos=.3] {{\normalsize\textsf{strong south east}}} node[action name, right, pos=.45] {$\Large{[2]}$} +(1,-1)
  ;
  \node[main, action name] at (1.5, 3.8) {{\normalsize\textsf{weak north}}};
\end{tikzpicture}
\end{center}
\caption{Grid-world of size $K=16$ with two \textcolor{black!80}{agents} (UUVs), their \textcolor{blue}{initial locations}, and four \textcolor{red}{target} states (left), and illustration of \textcolor{green!60!black}{weak}, and \textcolor{brown!80!black}{strong} actions (right).}
\label{fig:gridworld}
\end{figure}
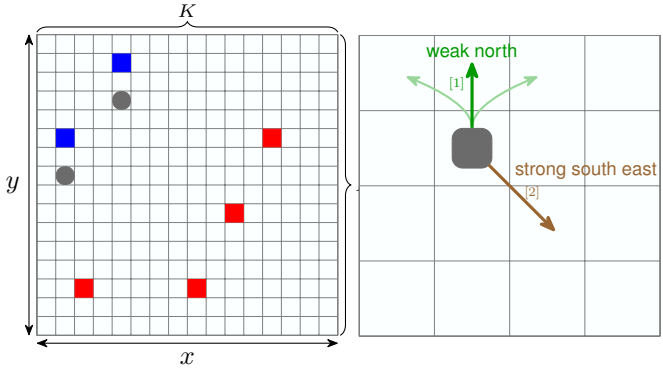

\section{Numerical Examples}\label{sec:examples}

This section demonstrates the applicability and scalability of the algorithms.
All experiments are performed in a simulation environment that models the high-level dynamics of unmanned underwater vehicles (UUVs) operating in environments with stochastic ocean currents, available at \url{https://github.com/fimdp/fimdpenv}. 
The environment models the currents (flow velocity and heading) based on~\cite{al2012extending}.
Each scenario consists of several agents navigating in two-dimensional grid of cells. 
The environment encodes a grid of size $K$ as a consumption MDP with two-dimensional state variables for $x,y \in \{1,\ldots, K\}$.
In a state (cell in the grid) $(x, y)$, agents can choose from 16 actions: 2 classes of actions with 8 directions (north increases $y$ by 1, north-east increases both $x$ and $y$ by 1, etc.) in each class.
The classes are: (1) weak actions, which consume less energy but have stochastic outcomes, and (2) strong actions with deterministic outcomes but with energy consumption doubled in comparison to weak actions. See \cref{fig:gridworld} for illustration.

The rest of this section presents three sets of examples.
First, we demonstrate the utility of not allocating targets to all agents for an optimal assignment and we relate the computed assignment to an assignment achieved by a multi-robot routing algorithm that minimizes the mission time published in~\cite{turpin2015approximation}.
Second sets of experiments benchmarks the scalability of the overall approach, using precise computation of the minimum capacities $\mincapName$, on an environment with a fixed size and varying number of agents or targets.
Finally, we demonstrate the scalability of the graph-based algorithms (using an approximate $\mincapName$) by running times as a function of the number of agents and targets. To compute the precise values of $\mincapName$, we run the tool FiMDP (Fuel in MDP), available at  \url{https://github.com/FiMDP/FiMDP}. 
All computations were performed on an Intel Core i9-9900u 2.50 GHz CPU and 64 GB of RAM.

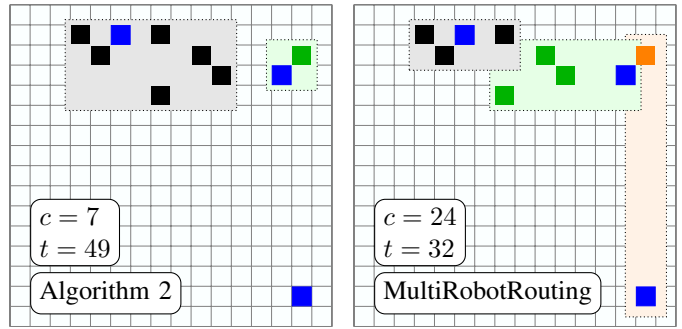
\begin{figure}[bt]
\def\gridsize{4}
\def\step{.25}
\tikzstyle{values} = [rounded corners, fill=white, align=left, draw, thin, anchor=south west]
\colorlet{darkgreen}{green!70!black}
\tikzstyle{alloc} = [inner sep=2pt, draw, densely dotted]
\centering
\begin{tikzpicture}[scale=1.07]
  \tikzstyle{brace} = [decorate,
  decoration={brace, amplitude=4pt, raise=0.2ex}]
\begin{scope}[on background layer]  
  \draw[gridworld, fill=background!10, step=\step] (0,0) grid (\gridsize,\gridsize) rectangle (0,0);
\end{scope} 
  
  \tikzstyle{target} = [cell, fill, draw, red]
  \tikzstyle{reload} = [cell, fill, draw, blue]
  \tikzstyle{init loc} = [cell, fill, draw, blue]
  \tikzstyle{agent loc} = [agent, minimum size=4pt, state, rounded corners=3pt]

  \foreach \row / \col in {1/5, 3/13, 14/14} 
    \node[init loc] (\row_\col) at (\col*\step, \gridsize-\row*\step-\step) {};
  
  \foreach \row / \col / \alloc in {1/3/black, 2/4/black, 1/7/black, 2/9/black, 3/10/black, 4/7/black, 2/14/darkgreen} 
    \node[cell, fill=\alloc] (\row_\col) at (\col*\step, \gridsize-\row*\step-\step) {};

\begin{scope}[on background layer]  
  \node[alloc, fill=green!10, fit=(3_13) (2_14)] {};
  \node[alloc, fill=black!10, fit=(1_3) (1_7) (2_4) (4_7) (3_10)] {};
\end{scope}

  \node[values] at (\step,.5+\step) {$c= 7$\\$t= 49$};
\node[values] at (\step,-0.1+\step) {\Cref{alg:bottleneck}};
\end{tikzpicture}
\hfill
\begin{tikzpicture}[scale=1.07]
  \tikzstyle{brace} = [decorate,
  decoration={brace, amplitude=4pt, raise=0.2ex}]
  \begin{scope}[on background layer]  
    \draw[gridworld, fill=background!10, step=\step] (0,0) grid (\gridsize,\gridsize) rectangle (0,0);
  \end{scope} 
  
  \tikzstyle{target} = [cell, fill, draw, red]
  \tikzstyle{reload} = [cell, fill, draw, blue]
  \tikzstyle{init loc} = [cell, fill, draw, blue]
  \tikzstyle{agent loc} = [agent, minimum size=4pt, state, rounded corners=3pt]

  \foreach \row / \col in {1/5, 3/13, 14/14} 
    \node[init loc] (\row_\col) at (\col*\step, \gridsize-\row*\step-\step) {};
  
  \foreach \row / \col / \alloc in {1/3/black, 2/4/black, 1/7/black, 2/9/darkgreen, 3/10/darkgreen, 4/7/darkgreen, 2/14/orange} 
    \node[cell, fill=\alloc] (\row_\col) at (\col*\step, \gridsize-\row*\step-\step) {};

\begin{scope}[on background layer]  
  \node[alloc, inner sep=4pt,fill=orange!10, fit=(2_14) (14_14)] {};
  \node[alloc, fill=green!10, fit=(2_9) (3_10) (4_7) (3_13)] {};
  \node[alloc, fill=black!10, fit=(1_3) (1_7) (2_4)] {};
\end{scope} 

\node[values] at (\step,.5+\step) {$c= 24$\\$t= 32$};
\node[values] at (\step,-0.1+\step) {MultiRobotRouting};
\end{tikzpicture}
% \begin{subfigure}[t]{0.20\linewidth}
% \centering
% \vspace{-3.5cm}
% \begin{tabular}{c|c|c|}
% \cline{2-3}
% \multicolumn{1}{l|}{}               & (a) & (b) \\ \hline
% \multicolumn{1}{|c|}{Capacity} & 7   & 24  \\ \hline
% \multicolumn{1}{|c|}{Time}     & 32  & 49  \\ \hline
% \end{tabular}
% \end{subfigure}
\caption{A grid-world of size $K=16$ with $3$ \textcolor{blue}{initial locations} of agents (UUVs) and $7$ targets, and assignments of the targets to agents computed by \cref{alg:bottleneck} (left) and by MultiRobotRouting algorithm~\cite{turpin2015approximation} (right). The target allocations and assignments are indicated by colors and boxes: all targets in a box are assigned to the \textcolor{blue}{agent} with the initial location also enclosed in the same box. The white boxes show the required capacity ($c$) and expected time ($t$) of the assignments.}
\label{fig:3_agent_example2}
\end{figure}

% \subsection{Multiple Agents Can Reduce the Required Capacity}

% In each example, we illustrate the behavior of the agents by snapshots at the initial and final time steps and also the time step where the current energy of an agent is minimal during the run. 

\subsection{Forcing all Agents to Work May not be Optimal}
\Cref{fig:3_agent_example2} shows a situation where the required capacity increases if all agents are required to visit some of the target locations. 
The left figure in \Cref{fig:3_agent_example2} shows target assignment computed by \cref{alg:bottleneck} for this example. 
The second assignment was produced by \emph{MultiRobotRouting} algorithm introduced in~\cite{turpin2015approximation}. 
The MultiRobotRouting algorithm first computes a target allocation, and then assigns targets to agents by solving the \emph{bottleneck assignment problem}~\cite{garfinkel1978bottleneck} for minimal time while requiring each agent to implement some tasks.
Therefore, MultiRobotRouting does not compute an allocation and assignment that minimizes the required resource capacity in this case.

The capacity required with the assignment (and the corresponding strategy) computed by the approach presented in this paper is $7$. 
The assignment produced by MultiRobotRouting requires a minimal capacity of $24$, which is about three times larger compared to the presented approach.
For comparison, we also estimate expected time to visit all target locations by the two strategies by simulating the strategies in the underlying consumption MDP. 
We run $1000$ simulations with each strategy. 
On average, the strategy synthesized by the approach presented in this paper needed $49$ time steps to visit all targets. 
The strategy created by MultiRobotRouting needed only $32$ on average.
The numbers indicate that one can significantly reduce the weight and manufacturing price of the UUVs by computing a task allocation and assignment that minimizes the required capacity for the price of
longer time to visit all target locations.

\subsection{Scalability of Computing the Cost Function}
\label{sec:experimets:full}
In our experience, computation of $\mincapName$ while building $G_{\mdp}^{\target, \Sinit}$ takes the most time of the overall solution. 
In this example, we measure the time needed to build $G_{\mdp}^{\target, \Sinit}$ for a grid-world of size $K=20$ and for different sizes of $\target$ and $\Sinit$.

\begin{figure}[t]
\centering
% This file was created by tikzplotlib v0.9.6.
\begin{tikzpicture}

\definecolor{color0}{rgb}{0.172549019607843,0.627450980392157,0.172549019607843}

\begin{axis}[
legend cell align={left},
legend style={fill opacity=0.8, draw opacity=1, text opacity=1, at={(0.03,0.97)}, anchor=north west, draw=white!80!black},
tick align=outside,
grid=both,
height=4cm,
tick pos=left,
x grid style={white!69.0196078431373!black},
xlabel={number of targets ($|\target|$)},
xmin=5.8, xmax=10.2,
width=7.0cm,
xtick style={color=black},
y grid style={white!69.0196078431373!black},
ylabel={time (s)},
ymin=76.2838485388727, ymax=265.191734792458,
ytick style={color=black},
title={$K=20, |\Sinit|=3$},
]
\path [draw=color0, thick]
(axis cs:6,83.7342070049448)
--(axis cs:6,93.575331454833);

\path [draw=color0, thick]
(axis cs:7,98.952301935493)
--(axis cs:7,110.965206475914);

\path [draw=color0, thick]
(axis cs:8,130.677829118248)
--(axis cs:8,154.175374845985);

\path [draw=color0, thick]
(axis cs:9,166.367039521195)
--(axis cs:9,202.208668486617);

\path [draw=color0, thick]
(axis cs:10,234.342002370056)
--(axis cs:10,256.741376326386);

\addplot [thick, color0, mark=triangle*, mark size=3, mark options={solid}]
table {%
6 88.6547692298889
7 104.958754205704
8 142.426601982117
9 184.287854003906
10 245.541689348221
};
\end{axis}
\end{tikzpicture}
\par\bigskip%
\centering
% This file was created by tikzplotlib v0.9.6.
\begin{tikzpicture}

\definecolor{color0}{rgb}{0.172549019607843,0.627450980392157,0.172549019607843}

\begin{axis}[
legend cell align={left},
legend style={fill opacity=0.8, draw opacity=1, text opacity=1, at={(0.03,0.97)}, anchor=north west, draw=white!80!black},
tick align=outside,
tick pos=left,
grid=both,
x grid style={white!69.0196078431373!black},
xlabel={number of agents ($|\Sinit|$)},
xmin=2.8, xmax=7.2,
height=4cm,
width=7.0cm,
xtick style={color=black},
y grid style={white!69.0196078431373!black},
ylabel={time (s)},
ymin=270.752153927422, ymax=407.70563732936,
ytick style={color=black},
title={$K=20, |\target|=10$},
]
\path [draw=color0, thick]
(axis cs:3,276.977312263874)
--(axis cs:3,302.749381270977);

\path [draw=color0, thick]
(axis cs:4,286.240993019958)
--(axis cs:4,322.606533625702);

\path [draw=color0, thick]
(axis cs:5,316.3621769292)
--(axis cs:5,345.253192772185);

\path [draw=color0, thick]
(axis cs:6,319.406807696364)
--(axis cs:6,377.146086705186);

\path [draw=color0, thick]
(axis cs:7,351.669240436108)
--(axis cs:7,401.480478992909);

\addplot [thick, color0, mark=triangle*, mark size=3, mark options={solid}]
table {%
3 289.863346767426
4 304.42376332283
5 330.807684850693
6 348.276447200775
7 376.574859714508
};
\end{axis}

\end{tikzpicture}
\caption{Average computation times and standard deviations measured over 5 runs for computing minimum capacities in a grid-world of size $K=20$ with varying number of targets (top) and agents (bottom).}
\label{fig:scalability_cmdp}
\end{figure}
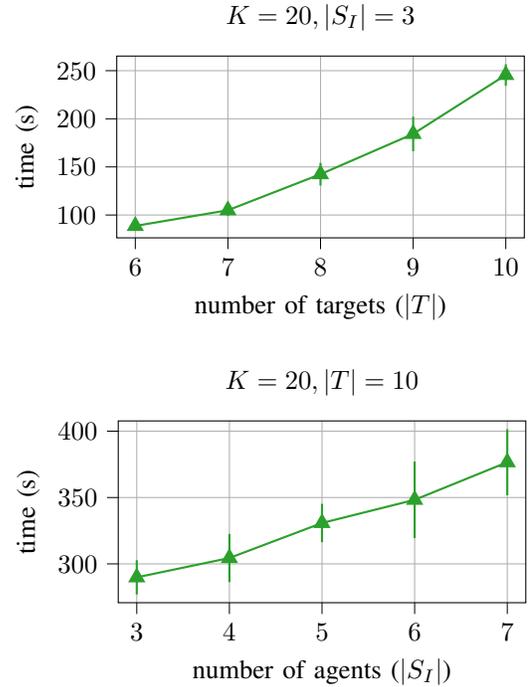

\Cref{fig:scalability_cmdp} (left) shows running times as a function of number of targets with the number of agents fixed to $|\Sinit|=3$. 
The plot on the right shows running times as a function of the number of agents with the number of targets fixed to $|\target|=10$.
The plots show the time needed by FiMDP to compute $\mincap[\mdp]{s_1}{s_2}$ for all $s_1, s_2 \in \target \cup \Sinit$ and building the graph $G_{\mdp}^{\target, \Sinit}$ using these values.
In particular, they do not contain the time needed to build the consumption MDP $\mdp$ in FiMDP.
We create $\mdp$ only once and it only takes a few seconds.
As expected, the time grows quadratically with the growing number of targets and linearly with the growing number of agents.

\subsection{Scalability of the Graph-Theoretic Algorithms}

%\mc{integrate it} 
Computing $\mincapName$ precisely while building $G_{\mdp}^{\target, \Sinit}$ requires repeated computation of strategy to safely reach a target $t_2$ from another target $t_1$ within certain capacity in the given consumption MDP $\mdp$.
This might be costly (as in the previous example) and, in some cases, not necessary.
One can have estimates of these values based on empirical data, or some over-approximations by faster algorithms. 
Or, the cost can even represent other values, e.g. maximal elevation on the path.
\Cref{alg:scc,,alg:bottleneck} then compute allocations and matchings that are optimal with respect to this cost.

\begin{figure}[t]
\centering
% This file was created by tikzplotlib v0.9.6.
\begin{tikzpicture}

\definecolor{color0}{rgb}{0.12156862745098,0.466666666666667,0.705882352941177}
\definecolor{color1}{rgb}{1,0.498039215686275,0.0549019607843137}
\definecolor{color2}{rgb}{0.172549019607843,0.627450980392157,0.172549019607843}

\begin{axis}[
legend cell align={left},
legend style={fill opacity=0.8, draw opacity=1, text opacity=1, at={(0.03,0.97)}, anchor=north west, draw=white!80!black},
tick align=outside,
tick pos=left,
width=7.5cm,
grid=both,
height=5cm,
x grid style={white!69.0196078431373!black},
xlabel={number of targets ($|\target|$)},
title={$K=40$, $|\Sinit|=10$},
legend style={},
xmin=-4.5, xmax=209.5,
xtick style={color=black},
y grid style={white!69.0196078431373!black},
ylabel={time (s)},
ymin=-0.151250772014703, ymax=3.18325961791437,
ytick style={color=black}
]
\path [draw=color0, thick]
(axis cs:10,0.000456700562451298)
--(axis cs:10,0.000626101256396358);

\path [draw=color0, thick]
(axis cs:20,0.000531988227443476)
--(axis cs:20,0.000579900658054571);

\path [draw=color0, thick]
(axis cs:30,0.000474540624917721)
--(axis cs:30,0.000624855127035403);

\path [draw=color0, thick]
(axis cs:40,0.0005009019600166)
--(axis cs:40,0.000698248125432619);

\path [draw=color0, thick]
(axis cs:50,0.000445065024189553)
--(axis cs:50,0.000628772256083884);

\path [draw=color0, thick]
(axis cs:60,0.000535189558858212)
--(axis cs:60,0.000655568192606632);

\path [draw=color0, thick]
(axis cs:70,0.000571946770316676)
--(axis cs:70,0.00068470987641184);

\path [draw=color0, thick]
(axis cs:80,0.000578332030819661)
--(axis cs:80,0.00063960543866276);

\path [draw=color0, thick]
(axis cs:90,0.000450433401320716)
--(axis cs:90,0.000727354379441003);

\path [draw=color0, thick]
(axis cs:100,0.000317882072982509)
--(axis cs:100,0.000602795112076085);

\path [draw=color0, thick]
(axis cs:110,0.00049689627050559)
--(axis cs:110,0.000782934662111598);

\path [draw=color0, thick]
(axis cs:120,0.000553510375688883)
--(axis cs:120,0.000663187317182211);

\path [draw=color0, thick]
(axis cs:130,0.000377640721221357)
--(axis cs:130,0.000615706446747393);

\path [draw=color0, thick]
(axis cs:140,0.000604463241726913)
--(axis cs:140,0.000649523116915665);

\path [draw=color0, thick]
(axis cs:150,0.000444025245205083)
--(axis cs:150,0.000586229118808589);

\path [draw=color0, thick]
(axis cs:160,0.000441472452584419)
--(axis cs:160,0.000814993459280816);

\path [draw=color0, thick]
(axis cs:170,0.000442225237462596)
--(axis cs:170,0.000678151349451467);

\path [draw=color0, thick]
(axis cs:180,0.000435320072011498)
--(axis cs:180,0.000593122310801003);

\path [draw=color0, thick]
(axis cs:190,0.000646752171143832)
--(axis cs:190,0.000707465358153043);

\path [draw=color0, thick]
(axis cs:200,0.000499997369793245)
--(axis cs:200,0.000749125249835661);

\path [draw=color1, thick]
(axis cs:10,0.0112801868941872)
--(axis cs:10,0.014176718470803);

\path [draw=color1, thick]
(axis cs:20,0.0330706136993261)
--(axis cs:20,0.0432839852996974);

\path [draw=color1, thick]
(axis cs:30,0.0570009109281365)
--(axis cs:30,0.102161419699305);

\path [draw=color1, thick]
(axis cs:40,0.0955279562544093)
--(axis cs:40,0.14359958433397);

\path [draw=color1, thick]
(axis cs:50,0.164767718440466)
--(axis cs:50,0.218975663059778);

\path [draw=color1, thick]
(axis cs:60,0.262883730983545)
--(axis cs:60,0.277223423862647);

\path [draw=color1, thick]
(axis cs:70,0.348655904369648)
--(axis cs:70,0.388382231158916);

\path [draw=color1, thick]
(axis cs:80,0.439409240352664)
--(axis cs:80,0.494630161655392);

\path [draw=color1, thick]
(axis cs:90,0.558180216991183)
--(axis cs:90,0.630857106006864);

\path [draw=color1, thick]
(axis cs:100,0.625622015148858)
--(axis cs:100,0.738419408648749);

\path [draw=color1, thick]
(axis cs:110,0.86055082053704)
--(axis cs:110,0.915713751606515);

\path [draw=color1, thick]
(axis cs:120,1.01048899342074)
--(axis cs:120,1.08118409465299);

\path [draw=color1, thick]
(axis cs:130,1.11453760853714)
--(axis cs:130,1.18983556040817);

\path [draw=color1, thick]
(axis cs:140,1.39079401602999)
--(axis cs:140,1.49400060066923);

\path [draw=color1, thick]
(axis cs:150,1.5620891100603)
--(axis cs:150,1.61524157460156);

\path [draw=color1, thick]
(axis cs:160,1.51218039685338)
--(axis cs:160,1.91825564211757);

\path [draw=color1, thick]
(axis cs:170,1.90370551282085)
--(axis cs:170,2.08793495959126);

\path [draw=color1, thick]
(axis cs:180,2.20986338226508)
--(axis cs:180,2.36176385314751);

\path [draw=color1, thick]
(axis cs:190,2.52918113040024)
--(axis cs:190,2.64920384121841);

\path [draw=color1, thick]
(axis cs:200,2.75514219700328)
--(axis cs:200,3.03095676005848);

\path [draw=color2, thick]
(axis cs:10,0.0117438518573606)
--(axis cs:10,0.0147958553264773);

\path [draw=color2, thick]
(axis cs:20,0.0336106769587725)
--(axis cs:20,0.0438558109257489);

\path [draw=color2, thick]
(axis cs:30,0.057515474073722)
--(axis cs:30,0.102746252305672);

\path [draw=color2, thick]
(axis cs:40,0.0960299951686724)
--(axis cs:40,0.144296695505156);

\path [draw=color2, thick]
(axis cs:50,0.165277300169984)
--(axis cs:50,0.219539918610533);

\path [draw=color2, thick]
(axis cs:60,0.263441908611533)
--(axis cs:60,0.277856003986123);

\path [draw=color2, thick]
(axis cs:70,0.349276362460928)
--(axis cs:70,0.389018429714365);

\path [draw=color2, thick]
(axis cs:80,0.440001898673639)
--(axis cs:80,0.4952554408039);

\path [draw=color2, thick]
(axis cs:90,0.558653935610147)
--(axis cs:90,0.631561175168662);

\path [draw=color2, thick]
(axis cs:100,0.625940158802019)
--(axis cs:100,0.739021942180647);

\path [draw=color2, thick]
(axis cs:110,0.86105561946887)
--(axis cs:110,0.916488783607302);

\path [draw=color2, thick]
(axis cs:120,1.01106063542059)
--(axis cs:120,1.08182915034602);

\path [draw=color2, thick]
(axis cs:130,1.11491884232621)
--(axis cs:130,1.19044767378708);

\path [draw=color2, thick]
(axis cs:140,1.3914226069056)
--(axis cs:140,1.49462599615226);

\path [draw=color2, thick]
(axis cs:150,1.56255336300985)
--(axis cs:150,1.61580757601603);

\path [draw=color2, thick]
(axis cs:160,1.51263915526646)
--(axis cs:160,1.91905334961636);

\path [draw=color2, thick]
(axis cs:170,1.90419714995604)
--(axis cs:170,2.08856369904299);

\path [draw=color2, thick]
(axis cs:180,2.2103085680003)
--(axis cs:180,2.36234710979511);

\path [draw=color2, thick]
(axis cs:190,2.52983721773377)
--(axis cs:190,2.64990197141418);

\path [draw=color2, thick]
(axis cs:200,2.75565711585471)
--(axis cs:200,3.03169096382669);

\addplot [thick, color0, mark=*, mark size=3, mark options={solid}]
table {%
10 0.000541400909423828
20 0.000555944442749023
30 0.000549697875976562
40 0.000599575042724609
50 0.000536918640136719
60 0.000595378875732422
70 0.000628328323364258
80 0.000608968734741211
90 0.000588893890380859
100 0.000460338592529297
110 0.000639915466308594
120 0.000608348846435547
130 0.000496673583984375
140 0.000626993179321289
150 0.000515127182006836
160 0.000628232955932617
170 0.000560188293457031
180 0.00051422119140625
190 0.000677108764648438
200 0.000624561309814453
};
\addlegendentry{Matching time}
\addplot [thick, color1, mark=o, mark size=3, mark options={solid}]
table {%
10 0.0127284526824951
20 0.0381772994995117
30 0.0795811653137207
40 0.119563770294189
50 0.191871690750122
60 0.270053577423096
70 0.368519067764282
80 0.467019701004028
90 0.594518661499023
100 0.682020711898804
110 0.888132286071777
120 1.04583654403687
130 1.15218658447266
140 1.44239730834961
150 1.58866534233093
160 1.71521801948547
170 1.99582023620605
180 2.2858136177063
190 2.58919248580933
200 2.89304947853088
};
\addlegendentry{SCC time}
\addplot [thick, color2, mark=triangle*, mark size=3, mark options={solid}]
table {%
10 0.0132698535919189
20 0.0387332439422607
30 0.0801308631896973
40 0.120163345336914
50 0.192408609390259
60 0.270648956298828
70 0.369147396087646
80 0.46762866973877
90 0.595107555389404
100 0.682481050491333
110 0.888772201538086
120 1.0464448928833
130 1.15268325805664
140 1.44302430152893
150 1.58918046951294
160 1.71584625244141
170 1.99638042449951
180 2.28632783889771
190 2.58986959457397
200 2.8936740398407
};
\addlegendentry{Total time}
\end{axis}

\end{tikzpicture}
\par\bigskip%
\centering
% This file was created by tikzplotlib v0.9.6.
\begin{tikzpicture}

\definecolor{color0}{rgb}{0.12156862745098,0.466666666666667,0.705882352941177}
\definecolor{color1}{rgb}{1,0.498039215686275,0.0549019607843137}
\definecolor{color2}{rgb}{0.172549019607843,0.627450980392157,0.172549019607843}

\begin{axis}[
legend cell align={left},
legend style={fill opacity=0.8, draw opacity=1, text opacity=1, at={(0.03,0.97)}, anchor=north west, draw=white!80!black},
tick align=outside,
tick pos=left,
width=7.5cm,
grid=both,
x grid style={white!69.0196078431373!black},
xlabel={number of agents ($|\Sinit|$)},
title={$K=40$, $|\target|=200$},
legend style={},
height=5cm,
ylabel={time (s)},
xmin=-4.25, xmax=204.5,
xtick style={color=black},
y grid style={white!69.0196078431373!black},
ymin=-0.876684113282601, ymax=15.5607837419751,
ytick style={color=black}
]
\path [draw=color0, thick]
(axis cs:5,0.000512986838050616)
--(axis cs:5,0.000808138192466962);

\path [draw=color0, thick]
(axis cs:15,0.00610835428285195)
--(axis cs:15,0.0104301941866915);

\path [draw=color0, thick]
(axis cs:25,0.01762779752618)
--(axis cs:25,0.0505573888599284);

\path [draw=color0, thick]
(axis cs:35,0.0491351090939379)
--(axis cs:35,0.0867339886157179);

\path [draw=color0, thick]
(axis cs:45,0.115194935527646)
--(axis cs:45,0.249000816616214);

\path [draw=color0, thick]
(axis cs:55,0.21050967478428)
--(axis cs:55,0.308983547452353);

\path [draw=color0, thick]
(axis cs:65,0.315208109653246)
--(axis cs:65,0.514594021999586);

\path [draw=color0, thick]
(axis cs:75,0.548541959556454)
--(axis cs:75,0.707101169792301);

\path [draw=color0, thick]
(axis cs:85,0.67736929130716)
--(axis cs:85,0.92037023830252);

\path [draw=color0, thick]
(axis cs:95,0.883835726220863)
--(axis cs:95,1.21722695306991);

\path [draw=color0, thick]
(axis cs:105,1.17536139265813)
--(axis cs:105,1.47102852090083);

\path [draw=color0, thick]
(axis cs:115,1.5160543986336)
--(axis cs:115,2.27251264371717);

\path [draw=color0, thick]
(axis cs:125,2.03830640264302)
--(axis cs:125,2.60580720953196);

\path [draw=color0, thick]
(axis cs:135,2.45460431703194)
--(axis cs:135,3.22373328081505);

\path [draw=color0, thick]
(axis cs:145,2.95039720915286)
--(axis cs:145,3.94795207597287);

\path [draw=color0, thick]
(axis cs:155,3.63643791771757)
--(axis cs:155,4.33767845534457);

\path [draw=color0, thick]
(axis cs:165,3.86227417782867)
--(axis cs:165,5.04810618563569);

\path [draw=color0, thick]
(axis cs:175,4.17806914737956)
--(axis cs:175,5.18421331950887);

\path [draw=color0, thick]
(axis cs:185,4.60950894732446)
--(axis cs:185,6.00641400437384);

\path [draw=color0, thick]
(axis cs:195,5.21588249042051)
--(axis cs:195,6.47054255173189);

\path [draw=color1, thick]
(axis cs:5,2.66209763620925)
--(axis cs:5,3.28184595967698);

\path [draw=color1, thick]
(axis cs:15,3.10015948808679)
--(axis cs:15,3.82982675039282);

\path [draw=color1, thick]
(axis cs:25,3.50072309092576)
--(axis cs:25,4.22755301876967);

\path [draw=color1, thick]
(axis cs:35,3.7362049665572)
--(axis cs:35,4.33598817919475);

\path [draw=color1, thick]
(axis cs:45,4.13525895398831)
--(axis cs:45,4.93795862871432);

\path [draw=color1, thick]
(axis cs:55,4.11828330841456)
--(axis cs:55,5.71849006328191);

\path [draw=color1, thick]
(axis cs:65,4.41239464200189)
--(axis cs:65,5.36092233740637);

\path [draw=color1, thick]
(axis cs:75,4.60027756747067)
--(axis cs:75,5.3916693204993);

\path [draw=color1, thick]
(axis cs:85,4.7814687205087)
--(axis cs:85,6.0061204957236);

\path [draw=color1, thick]
(axis cs:95,4.86917222058465)
--(axis cs:95,5.76389836752723);

\path [draw=color1, thick]
(axis cs:105,5.14000487156611)
--(axis cs:105,6.07049710921544);

\path [draw=color1, thick]
(axis cs:115,5.00946742671795)
--(axis cs:115,7.35162292343312);

\path [draw=color1, thick]
(axis cs:125,5.48743648053123)
--(axis cs:125,6.99122248171853);

\path [draw=color1, thick]
(axis cs:135,6.07164177164223)
--(axis cs:135,7.17588010564659);

\path [draw=color1, thick]
(axis cs:145,6.19222551035009)
--(axis cs:145,7.72117819143213);

\path [draw=color1, thick]
(axis cs:155,6.89167109731859)
--(axis cs:155,7.84769755597883);

\path [draw=color1, thick]
(axis cs:165,6.86805409076961)
--(axis cs:165,8.59327107784001);

\path [draw=color1, thick]
(axis cs:175,7.52927052752012)
--(axis cs:175,8.6135417006827);

\path [draw=color1, thick]
(axis cs:185,7.51019151041446)
--(axis cs:185,9.1118749158627);

\path [draw=color1, thick]
(axis cs:195,7.76769062912054)
--(axis cs:195,8.93710568511896);

\path [draw=color2, thick]
(axis cs:5,2.66264054405369)
--(axis cs:5,3.28262417686306);

\path [draw=color2, thick]
(axis cs:15,3.10645837821679)
--(axis cs:15,3.84006640873236);

\path [draw=color2, thick]
(axis cs:25,3.52714085172824)
--(axis cs:25,4.2693204443533);

\path [draw=color2, thick]
(axis cs:35,3.79416702542299)
--(axis cs:35,4.41389521803862);

\path [draw=color2, thick]
(axis cs:45,4.27613029894932)
--(axis cs:45,5.16128303589718);

\path [draw=color2, thick]
(axis cs:55,4.35093990387357)
--(axis cs:55,6.00532669005954);

\path [draw=color2, thick]
(axis cs:65,4.79342907881574)
--(axis cs:65,5.80969003224535);

\path [draw=color2, thick]
(axis cs:75,5.16890507804792)
--(axis cs:75,6.0786849392708);

\path [draw=color2, thick]
(axis cs:85,5.49771379974795)
--(axis cs:85,6.88761494609403);

\path [draw=color2, thick]
(axis cs:95,5.78780765628226)
--(axis cs:95,6.94632561112039);

\path [draw=color2, thick]
(axis cs:105,6.37074853135219)
--(axis cs:105,7.48614336298832);

\path [draw=color2, thick]
(axis cs:115,6.56718340132682)
--(axis cs:115,9.58247399117501);

\path [draw=color2, thick]
(axis cs:125,7.64433903029408)
--(axis cs:125,9.47843354413067);

\path [draw=color2, thick]
(axis cs:135,8.64212047582963)
--(axis cs:135,10.2837389993062);

\path [draw=color2, thick]
(axis cs:145,9.18380929360078)
--(axis cs:145,11.6279436933072);

\path [draw=color2, thick]
(axis cs:155,10.5721102446533)
--(axis cs:155,12.1413747817063);

\path [draw=color2, thick]
(axis cs:165,10.56973299344096)
--(axis cs:165,13.401972538633);

\path [draw=color2, thick]
(axis cs:175,11.77599672669944)
--(axis cs:175,13.7290979683918);

\path [draw=color2, thick]
(axis cs:185,12.0806734794945)
--(axis cs:185,15.057315898481);

\path [draw=color2, thick]
(axis cs:195,13.0105432888916)
--(axis cs:195,15.3806780675003);
\addplot [thick, color0, mark=*, mark size=3, mark options={solid}]
table {%
5 0.000660562515258789
15 0.00826927423477173
25 0.0340925931930542
35 0.0679345488548279
45 0.18209787607193
55 0.259746611118317
65 0.414901065826416
75 0.627821564674377
85 0.79886976480484
95 1.05053133964539
105 1.32319495677948
115 1.89428352117538
125 2.32205680608749
135 2.83916879892349
145 3.44917464256287
155 3.98705818653107
165 4.25519018173218
175 4.68114123344421
185 5.30796147584915
195 5.8432125210762
};
\addlegendentry{Matching time}
\addplot [thick, color1, mark=o, mark size=3, mark options={solid}]
table {%
5 2.97197179794312
15 3.46499311923981
25 3.86413805484772
35 4.03609657287598
45 4.53660879135132
55 4.91838668584824
65 4.88665848970413
75 4.99597344398498
85 5.39379460811615
95 5.31653529405594
105 5.60525099039078
115 6.18054517507553
125 6.23932948112488
135 6.62376093864441
145 6.95670185089111
155 7.36968432664871
165 7.73066258430481
175 8.07140611410141
185 8.26103321313858
195 8.35239815711975
};
\addlegendentry{SCC time}
\addplot [thick, color2, mark=triangle*, mark size=3, mark options={solid}]
table {%
5 2.97263236045837
15 3.47326239347458
25 3.89823064804077
35 4.1040311217308
45 4.71870666742325
55 5.17813329696655
65 5.30155955553055
75 5.62379500865936
85 6.19266437292099
95 6.36706663370132
105 6.92844594717026
115 8.07482869625091
125 8.56138628721237
135 9.4629297375679
145 10.405876493454
155 11.3567425131798
165 11.985852766037
175 12.7525473475456
185 13.5689946889877
195 14.195610678196
};
\addlegendentry{Total time}
\end{axis}

\end{tikzpicture}
\caption{Average computation times and standard deviations needed to compute an optimal assignment for a varying number of targets (top) and agents (bottom) in a grid-world of size $K=40$, measured over 20 runs. The times exclude building of the graph $G$.}
\label{fig:scalability}
\end{figure}
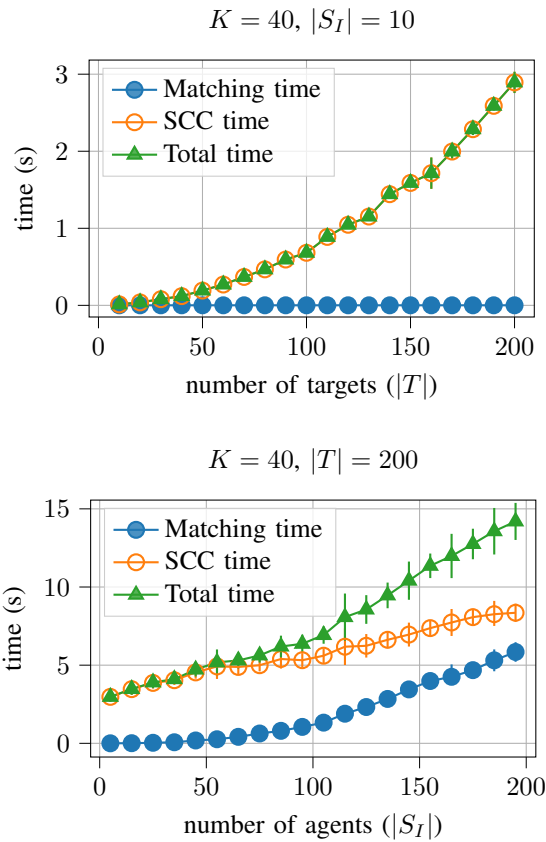

In this example, we measure the time needed to compute the target allocation and assignment for different sizes of $\target$ and $\Sinit$.
We estimated the cost in a grid-world of size $K=40$ by an ad-hoc distance-based heuristic, which only took less than a second.
The SCC time consists of computing the SCCs of the graph $G$ in each iteration.
The matching time consists of building the bipartite graph $B(G, I, V')$ and computing a maximal matching in this bipartite graph.

\Cref{fig:scalability} (left) shows running times as a function of number of targets with the number of agents fixed to $|\Sinit|=10$.
The plot on the right shows running times as a function of the number of agents with the number of targets fixed to $|\target|=200$.
The time for computing SCCs and maximum matchings grows quadratically and is constant with an increasing number of targets. 
On the other hand, the time for computing SCCs and maximum matchings grows sublinearly and quadratically with an increasing number of targets.
The results in \cref{fig:scalability} demonstrate that we can compute a target allocation and an assignment to large groups of agents and targets rapidly, provided that we can obtain estimates of minimum capacities.
The results also show that computing the allocations and assignments are significantly faster than computing the exact minimum capacities by synthesizing strategies in the consumption MDP. 

Finally, we visualize the assignments and resulting strategies in the environment with $5$ agents and $60$ targets in \cref{fig:example3}. 
The required (estimated) capacity is $20$. 
In \cref{fig:example3}, we illustrate the initial locations of the agents and targets (left), the time-step (indicated with $t$) where the current energy level (the vector $e$) of one of the agents is minimal (middle), and the final time-step where all targets are visited by some agent. 
We note that the exact trajectories and the minimal energy of the agents can vary between different runs due to the stochastic transitions in the underlying consumption MDP $\mdp$.

\begin{figure}[t]
    \centering
\includegraphics[width=0.5\textwidth]{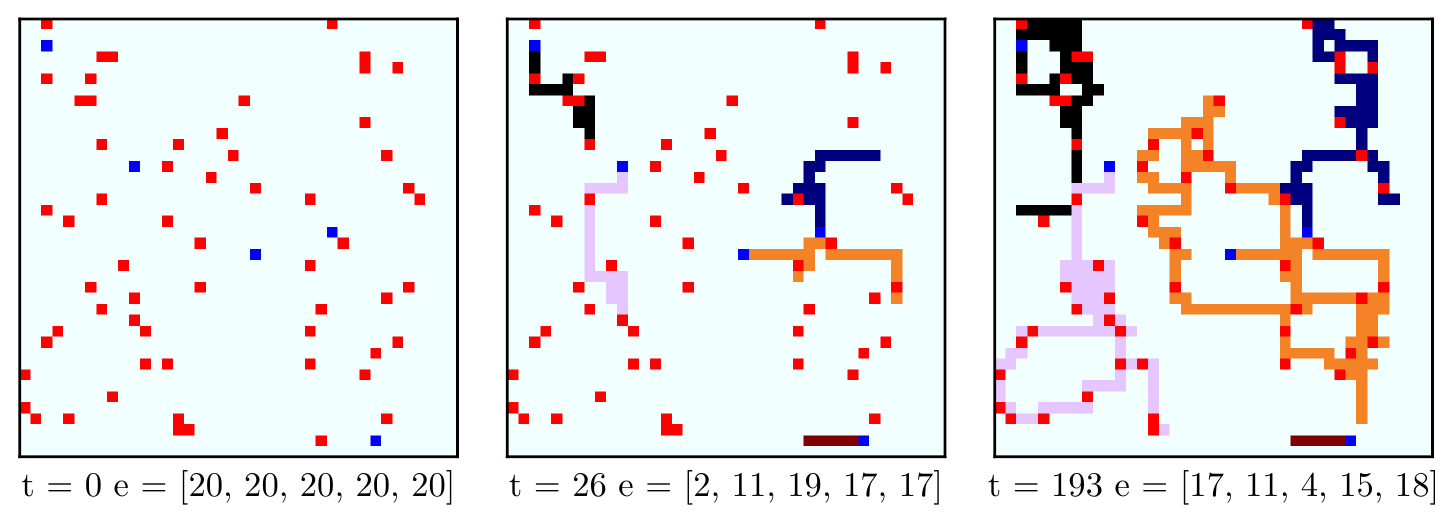}%
    \caption{A UUV example with 5 agents with their  \textcolor{blue}{initial locations} and 60 \textcolor{red}{targets} with a grid size of $K=40$ and a maximum capacity of $20$. 
    We denote the trajectory of the agents with different colored cells.}
    \label{fig:example3}
\end{figure}

\section{Conclusions and Future Work}
We presented an algorithm for high-level planning for a team of homogeneous agents under resource constraints. 
In particular, we compute a target assignment to each agent to ensure that the agents can visit their assigned targets with minimal capacity.
The objective of the agents is to visit a set of target locations infinitely often with probability one.
We formalized the behavior of each agent as a consumption Markov decision process, a model for probabilistic decision-making of resource-constrained systems.
We reduced the target assignment problem to a graph-theoretical problem on graph computed in time polynomial in the size of the consumption Markov decision process. 
The resulting algorithm solves the graph problem in time that is polynomial in the number of agents and target locations.
We showed that the algorithm can efficiently compute target allocations with hundreds of agents and targets while minimizing the required capacity of each agent to satisfy the tasks.

Future work include extensions to quantitative analysis. e.g., developing approximation algorithms that compute minimal time allocations while satisfying the capacity requirements.
Additionally, we are interested in the settings where the agents may have partial information about their current state or may not precisely know the probabilities of the transition function.
Finally, we will extend the framework to perform task allocation and planning in heterogeneous multi-agent systems to implement more diverse tasks.

\bibliographystyle{IEEEtran}
\bibliography{journal}

\begin{IEEEbiography}[{\includegraphics[width=1in,height=1.25in,clip,keepaspectratio]{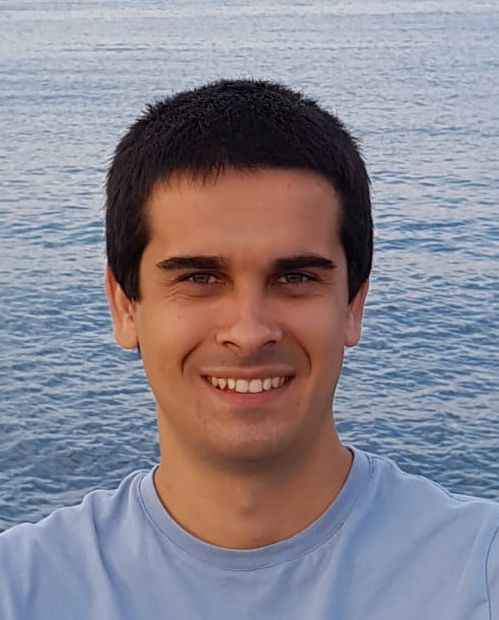}}]{Murat Cubuktepe} joined the Department of Aerospace Engineering at the University of Texas at Austin as a Ph.D. student in Fall 2015. He received his B.S degree in Mechanical Engineering from Bogazici University in 2015 and his M.S degree in Aerospace Engineering and Engineering Mechanics from the University of Texas at Austin in 2017. His research focuses on developin theory and algorithms for verified learning and control for autonomous systems.
\end{IEEEbiography}

\begin{IEEEbiography}[{\includegraphics[width=1in,height=1.25in,clip,keepaspectratio]{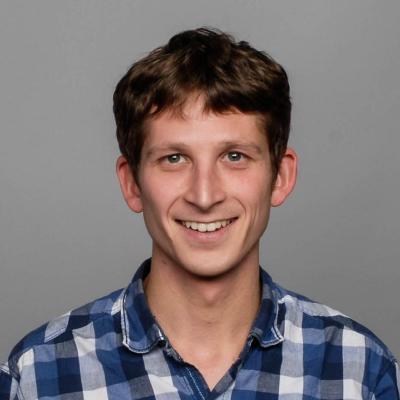}}]{František Blahoudek} is a postdoctoral researcher at the Faculty of Information Technology, Brno University of Technology, Czech Republic. He was a postdoctoral researcher in the group of Ufuk Topcu at the University of Texas at Austin. He received his Ph.D. degree from the Masaryk University, Brno in 2018. His research focuses on automata in formal methods and on planning under resource constraints.
\end{IEEEbiography}

\begin{IEEEbiography}[{\includegraphics[width=1in,height=1.25in,clip,keepaspectratio]{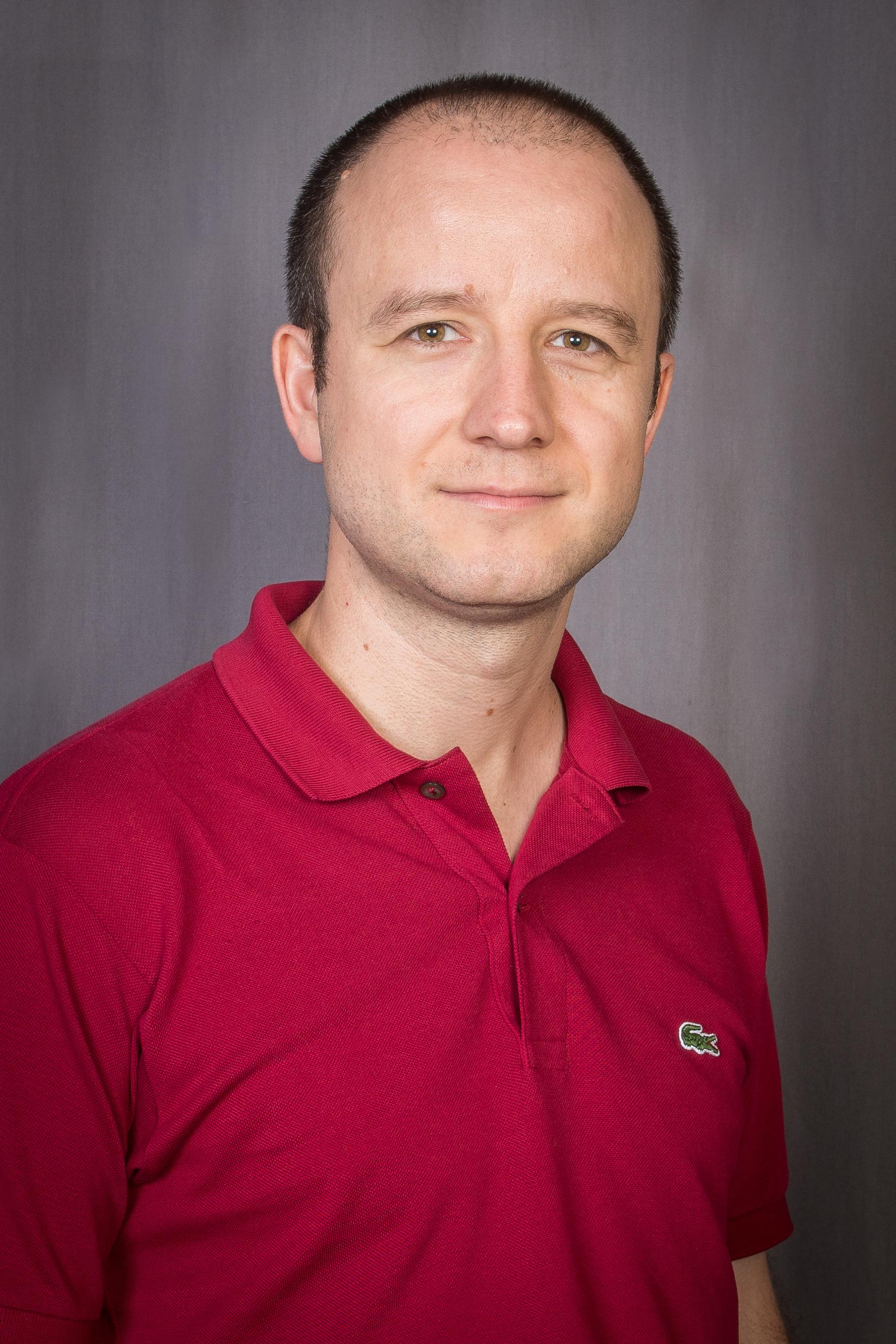}}]{Ufuk Topcu} is an associate professor in the Department of Aerospace Engineering and Engineering Mechanics and the Oden Institute at The University of Texas at Austin. He received his Ph.D. degree from the University of California at Berkeley in 2008. His research focuses on the theoretical, algorithmic, and computational aspects of design and verification of autonomous systems through novel connections between formal methods, learning theory and controls.
\end{IEEEbiography}

\vfill

\end{document}